\newtheorem{thm}{Theorem}
\newtheorem{prop}{Proposition}
\newtheorem{defn}{Definition}
\newtheorem{cor}{Corollary}
\newtheorem{lem}{Lemma}
\newtheorem{rem}{Remark}
\newtheorem{ex}{Example}
\def\ran{\rangle}
\def\lan{\langle}
\newcommand{\m}{\marginpar}
\newcommand{\sg}{\sigma}
\newcommand{\ta}{\theta}
\newcommand{\vt}{\vartheta}
\newcommand{\al}{\alpha}
\newcommand{\be}{\beta}
\newcommand{\de}{\delta}
\newcommand{\vp}{\varphi}
\begin{document}

\begin{center}
{\Large\bf Marginally outer trapped tubes in de Sitter spacetime} \\ ~\\
{\large Marc Mars} \\
Departamento de F\'isica Fundamental, Universidad de Salamanca \\
Plaza de la Merced s/n, 37008 Salamanca, Spain
\\~\\ {\large Carl Rossdeutscher, Walter Simon and Roland Steinbauer}\\
Fakult\"at f\"ur Mathematik, Universit\"at Wien\\
Oskar-Morgenstern Platz 1, 1090 Wien, Austria
\end{center}

\begin{abstract}

We prove two results which are relevant for constructing marginally outer trapped tubes (MOTTs)
in de Sitter spacetime. The first one (Theorem 1) holds more generally, namely for spacetimes
satisfying the null convergence condition and containing a timelike conformal Killing vector with a "temporal function". We show that all marginally outer trapped surfaces (MOTSs) in such a spacetime are unstable. This prevents application of standard results on the propagation of stable MOTSs to MOTTs.

On the other hand, it was shown recently \cite{HHT} that, for every sufficiently high genus, there exists a smooth, complete family of CMC surfaces embedded in the round 3-sphere $\mathbb{S}^3$. This family connects a Lawson minimal surface with a doubly covered geodesic 2-sphere. We show (Theorem 2) by a simple scaling argument that this result translates to an existence proof for complete MOTTs with CMC sections in de Sitter spacetime. Moreover, the area of these sections increases strictly monotonically. We compare this result with an area law obtained before for holographic screens.
\end{abstract}

\section{Introduction}

Our general setting is a smooth $n$-dimensional ($n \ge 3$), oriented and time-oriented Hausdorff manifold $({\cal M}, g)$ with a smooth metric of signature $(-,+,...+)$. In Sect.\ 2 we require the Ricci tensor Ric${}_g$ to satisfy the null convergence condition Ric${}_g(\ell,\ell) \ge 0$ for all null vectors $\ell$,
while in  Sect.\  3  we restrict ourselves to  de Sitter spacetime (dS).
We first recall well-known key definitions to also fix our notation. Note that we use the "physics" convention
throughout, in particular in Def.\   1 where most mathematics references (e.g.\  \cite{HHT}) define the mean curvature by $\widehat H = H/\mbox{dim}~ {\cal U}$.

\begin{defn}
\label{mc}
Let ${\cal U}$ be a hypersurface  (i.e.\ a submanifold of codimension one) embedded in any semi-Riemannian manifold ${\cal N}$. The {\bf mean curvature} $H$ of ${\cal U}$ is the trace of its second fundamental form.
\end{defn}

\begin{defn}
\label{mots}
Let  ${\cal F}$ be a spacelike surface of codimenion two embedded in ${\cal M}$ with future directed, orthogonal null
vectors $\ell^{\pm}$, and denote the null expansions of the corresponding families of emanating geodesics by $\ta^{\pm} = tr_{\cal F} (\nabla \ell^{\pm})$.

\begin{itemize}
\item A {\bf Marginally outer trapped surface (MOTS)} or just {\bf Marginal surface} ${\cal F}$ is defined by $\ta^{+} =0$. Here $\ell^+$ defines the "outer"  direction; the latter is ill-defined for minimal surfaces for which $\ta^{-} = 0$ as well. (As to terminology cf.\ Remark \ref{outer} in Sect.\ \ref{nost}).
\item A {\bf Marginally trapped surface (MTS)}
is a MOTS with  either $\ta^{-} < 0$  or $\ta^{-} > 0$ (we avoid the term "past trapped" or "antitrapped" for the latter case).
\item A {\bf Marginally outer trapped tube (MOTT) ${\cal T}$} is a hypersurface  foliated by MOTS.
\item A {\bf Marginally trapped tube (MTT)} or {\bf holographic screen} is a hypersurface foliated by MTSs.
\end{itemize}

\end{defn}

MOTS, MTS or trapped surfaces which satisfy $\ta^{+} < 0$ and $\ta^{-} < 0$, are key in  singularity theorems which yield incomplete geodesics.
However, in a typical collapse scenario, there is not only the (outermost) ``key MOTS'' but also
 a plethora of other, seemingly ``spurious''  MOTS (see e.g. \cite{PK,IB} and references therein).
One motivation of this work is to better understand such spurious  MOTS.

Apart from these collapse scenarios, MOTS are normally also abundant in non-singular spacetimes which do not satisfy all requirements for the singularity theorems.
We mention two settings in which they are worth studying:
Firstly, under additional assumptions, null geodesics emanating from MOTS have conjugate points while still being complete.
This suggests a recent  definition \cite{NY} of the "outward affine size" of a non-singular universe
with a MOTS. Secondly, for asymptotically dS spacetimes, there are results on the "visibility" of future weakly trapped surfaces (defined by $\ta^{\pm}\le 0$) from infinity \cite{CGL}.

A key property of  MOTS is their (in)stability, cf.\ Def.\ \ref{stab} and Remark \ref{var} below. Here we recall important properties of strictly stable MOTSs: They  have spherical topology, inherit all symmetries of the ambient space, and propagate along achronal MOTTs in a given spacelike foliation of spacetime \cite{AMS,AMMS}. Somewhat weaker results hold in the non-strictly stable case.
Moreover, stability also plays a role in a singularity theorem (cf.\ Thm. 7.1 and Remark 7.2 in \cite{AMMS}).

The present work has two somehwat independent aspects exposed in Sect.\  2 and 3. In Sect.\ 2 we show absence of stable MOTS in a class of spacetimes containing a timelike conformal Killing field with a "temporal function". The precise result is as follows:

\begin{thm}
    \label{thmConf}
Let  $({\cal M},g)$  be an $n$-dimensional  ($n \ge 3$) spacetime satisfying the null convergence condition. Assume  $({\cal M},g)$ admits a
    future directed timelike conformal Killing field $\xi$, i.e.
    \begin{align*}
      \pounds_{\xi} g = 2 \Psi g
    \end{align*}
    where $\pounds_{\xi}$ is the Lie derivative and $\Psi$ is a temporal function, i.e. its gradient is timelike and past directed. Then $({\cal M},g)$ admits no stable MOTS.
\end{thm}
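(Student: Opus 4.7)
The plan is to exhibit, on any MOTS $\mathcal F$, a strictly positive function $\phi$ for which the stability operator $L$ satisfies $L\phi<0$ pointwise. By the Krein--Rutman characterization of the principal eigenvalue of the (non-self-adjoint) elliptic operator $L$, this forces $\lambda_1(L)<0$, which is exactly the statement that $\mathcal F$ is unstable. The test function is built from $\xi$ itself. Decomposing
\[
  \xi\big|_{\mathcal F} = A\,\ell^+ + B\,\ell^- + \xi^{\parallel}, \qquad A = -\langle \xi,\ell^-\rangle,\ \ B = -\langle \xi,\ell^+\rangle,
\]
with $\langle \ell^+,\ell^-\rangle = -1$ and $\xi^\parallel\in T\mathcal F$, the future-timelike character of $\xi$ and the future-null character of $\ell^{\pm}$ force $A,B>0$ everywhere on $\mathcal F$. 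The candidate test function is $\phi = \sqrt 2\,B>0$.

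The crucial input is the variation of $\theta^+$ induced by flowing $\mathcal F$ along $\xi$. Integrating $\pounds_\xi g = 2\Psi g$ gives $\phi_t^{*}g = \Omega_t^{2}g$ with $\dot\Omega_0=\Psi$, and under a pure conformal rescaling $g\mapsto \Omega^{2}g$ (keeping the null vector $\ell^+$ fixed) a direct computation produces $\tilde\theta^+ = \theta^+ + (n-2)\,\Omega^{-1}\ell^+(\Omega)$. Differentiation at $t=0$ on a MOTS then yields the pointwise identity
\[
  \delta_\xi \theta^+\big|_{\mathcal F} = (n-2)\,\ell^+(\Psi).
\]
Because $\nabla\Psi$ is past-directed timelike and $\ell^+$ future-directed null, $\ell^+(\Psi)>0$ strictly, so the right-hand side is strictly positive on $\mathcal F$.

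By linearity of the variation and because $\delta_{\xi^\parallel}\theta^+=0$ on a MOTS, one has $\delta_\xi\theta^+ = \delta_{A\ell^+}\theta^+ + \delta_{B\ell^-}\theta^+$. Deformations along $\ell^+$ remain inside the null hypersurface generated by $\ell^+$, so Raychaudhuri gives the purely pointwise expression $\delta_{A\ell^+}\theta^+ = -A\bigl(|\sigma^+|^2+\mathrm{Ric}(\ell^+,\ell^+)\bigr)\le 0$. Now consider the standard MOTS stability operator $Lf = \delta_{f\nu}\theta^+$ with $\nu = (\ell^+-\ell^-)/\sqrt 2$. Writing $\sqrt 2\,B\,\nu = B\ell^+ - B\ell^-$ and using linearity once more, the above identities combine into
\[
  L(\sqrt 2\, B) = -(A+B)\bigl(|\sigma^+|^2+\mathrm{Ric}(\ell^+,\ell^+)\bigr) - (n-2)\,\ell^+(\Psi).
\]
The first term is nonpositive by NCC and $A+B>0$, the second is strictly negative, so $L(\sqrt 2 B)<0$ everywhere on $\mathcal F$ with $\sqrt 2 B>0$. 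This forces $\lambda_1(L)<0$ and concludes that $\mathcal F$ is unstable.

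The main obstacle is the second paragraph: one must work with both a family of surfaces and a family of metrics and commute them, keeping track that the normalization of $\ell^+$ contributes only at higher order at a MOTS. What is remarkable, and effectively forced by the conformal Killing identity, is that although $\delta_{B\ell^-}\theta^+$ is \emph{a priori} a second-order elliptic operator in $B$, on the very special function $B=-\langle \xi,\ell^+\rangle$ dictated by $\xi$ the Laplacian-type terms cancel and one is left with a purely pointwise expression, which is what makes the stability argument close with such a clean sign.
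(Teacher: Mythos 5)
Your computation is sound and, at its core, it is the same argument as the paper's: both proofs hinge on the identity $\delta_{\xi}\theta^{+}=(n-2)\,\ell^{+}(\Psi)>0$ (which you derive by a direct conformal-rescaling computation rather than by citing the Carrasco--Mars deformation identity, but the result is identical), and both then convert this into an eigenvalue statement by feeding the positive function $-\langle\xi,\ell^{+}\rangle$ into a maximum-principle characterization of the principal eigenvalue. Your "remarkable cancellation of the Laplacian-type terms" is precisely the content of the paper's Lemma on $L$, namely the relation $\delta_{\xi}\theta^{+}=L(u)-2BZ$ with $u=\langle\xi,\ell^{+}\rangle$, rearranged.

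There is, however, one genuine discrepancy you must repair: you prove instability of the \emph{wrong} stability operator. You take $Lf=\delta_{f\nu}\theta^{+}$ with $\nu=(\ell^{+}-\ell^{-})/\sqrt{2}$, whereas the theorem's notion of stability (Definition 4 together with Remark 2 of the paper) refers to the operator $L=L_{v}$ with $v=-\tfrac{1}{2}\ell^{-}$. These differ by a zeroth-order term: writing $W=|\sigma^{+}|^{2}+\mathrm{Ric}(\ell^{+},\ell^{+})\ge 0$, one has $L_{\nu}=L_{-\ell^{-}/2}-cW$ with $c>0$, so by monotonicity of the principal eigenvalue in the potential, $\lambda(L_{\nu})\le\lambda(L_{-\ell^{-}/2})$. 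Hence your conclusion $\lambda(L_{\nu})<0$ is strictly \emph{weaker} than the claim $\lambda(L)<0$; a MOTS could a priori be unstable in your sense yet stable in the paper's sense. The fix is already contained in your own intermediate formula: since $\delta_{B\ell^{-}}\theta^{+}=(n-2)\ell^{+}(\Psi)+AW>0$, the deformation purely along $-\ell^{-}$ gives $L_{-\ell^{-}}(B)=-\delta_{B\ell^{-}}\theta^{+}<0$ with $B>0$, which is exactly the needed test-function inequality for the paper's operator; the extra $\delta_{B\ell^{+}}\theta^{+}$ term you added only improves the sign and is not needed. (A second, cosmetic point: to conclude also that marginal stability, $\lambda=0$, is excluded, invoke the case distinction in the maximum principle --- $\lambda=0$ forces $L\phi\equiv 0$ for the positive supersolution, contradicting strict negativity --- which your Krein--Rutman appeal covers but which deserves a word.)
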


In Sect.\ 3 we restrict ourselves to  dS for which Ric${}_g = (n - 1)  \delta^2 g$, $\delta$ a positive constant. It can be globally written in the form
\begin{align}
\label{sph}
  g_{\mbox{\tiny dS}} = \frac{1}{\delta^2 \cos^2 \sigma} \left ( - d \sigma^2 + g_{\mathbb{S}^{n-1}} \right )
\end{align}
where $\sigma \in \left ( - \frac{\pi}{2}, \frac{\pi}{2} \right )$ is taken to increase to the future, and
$g_{\mathbb{S}^{n-1}}$ is the standard metric  on the round unit sphere
$\mathbb{S}^{n-1}$.

Choosing $\xi = \frac{\partial}{\partial \sigma}$ in Thm.\  1 we show in Corollary 1 that all MOTSs in dS are unstable. Hence the above-said results of \cite{AMS,AMMS}  on topology,
symmetry and evolution to achronal MOTTs do not apply. To analyze  MOTTs we now restrict ourselves to 3+1 dimensions and focus on umbilic and constant curvature slicings of dS (with curvatures of all possible signs). It is easy to show (Lemma 2 in Sect.\ 3.2.1) that, in this setting, MOTSs are automatically MTSs and (as hypersurfaces within the slices) surfaces of constant mean curvature (CMC).
This allows, in particular, constructing a MTSs from certain families of CMC surfaces in either flat $\mathbb{R}^3$, the hyperbolic space $\mathbb{H}^3$ or the round $\mathbb{S}^3$.
We consider the former two settings (in Sects.\ \ref{flat} and \ref{hyp}) as a ``warmup" and focus on $\mathbb{S}^3$ (cf.\  Sect.\ \ref{css}).

In Sect.\ 3.2.4 we analyze, then,  dS in terms of its "standard" umbilic slicing (\ref{sph}) by round three spheres ${\mathbb S}^3_{\sg}$
of radius $\rho = \delta^{-1} \cos^{-1} \sigma$.
In this slicing, MOTS with $\ta^{\pm} = 0$ correspond to
CMC surfaces with mean curvature $H = \mp 2 \delta \sin \sigma$, thus in particular minimal for $\sigma = 0$. The resulting MTSs are easily found explicitly in the case of spherical and toroidal MTS-sections. In the former case, the MTSs are null surfaces with sections of  constant area, while in the latter case, the MTSs are timelike and their area increases monotonically from the Clifford torus at $\sigma = 0$ to $\sigma \rightarrow \pm \frac{\pi}{2}$ (where it diverges).

On the other hand, knowledge of CMC surfaces of genus ${\mathfrak g}$ larger than one in $\mathbb{S}^3$  is rather rudimentary. However, for every ${\mathfrak g}$ sufficiently large (${\mathfrak g} \gg 1$), S. Charlton, L. Heller, S. Heller and M. Traizet \cite{HHT} recently proved existence of a complete family of  CMC surfaces
$f_{\psi}^{\mathfrak g}$,  ($\psi \in (-\frac{\pi}{4}, \frac{\pi}{4}$)) embedded in the unit sphere $\mathbb{S}^3$. This family connects the well-known Lawson minimal surface $\xi_{1,{\mathfrak g}}$ (for $\psi = 0$)
with a doubly covered geodesic sphere (for $\psi = \pm \frac{\pi}{4}$).
Moreover, these authors show that the Willmore energy of these surfaces,
defined by
\begin{align}
\label{Wil}
W = A (\frac{H^2}{4} + 1),
\end{align}
where $A$ is the area, increases strictly monotonically from the Lawson surfaces towards the limiting spheres.

After quoting and slightly extending the results of \cite{HHT} as Theorem 3 in Sect.\ 3  we describe its straightforward application to dS. In fact we only need to rescale the geometry from the  $\mathbb{S}^3$ of unit radius to the spherical sections of dS whose radius blows up with the cosmological time $\sigma$ as sketched above. Remarkably, this scaling entails that the Willmore energy of the CMC surfaces found in  \cite{HHT} translates just to the area of the MTS-sections in the dS setting and yields the corresponding monotonicity statement. Thus our adaptation of the results of \cite{HHT} reads as follows.

 \begin{thm}
 \label{family}
 Our setting is de Sitter spacetime in coordinates (\ref{sph}) for $n = 4$. For every  ${\mathfrak g} \in \mathbb{N}$, ${\mathfrak g} \gg 1$ , there exist a smooth function $\sg: (-\frac{\pi}{4}, \frac{\pi}{4})\ni\psi \mapsto  \sg(\psi) \in (-\sg_{m}, \sg_{m})$ and a smooth family of conformal CMC embeddings ${\cal F}_{\psi}^{\mathfrak g}: {\cal R}^{\mathfrak g} \rightarrow \mathbb{S}^3_{\sg}$ of a Riemann surface ${\cal
 R}^{\mathfrak g}$ of genus ${\mathfrak g}$ into the round 3-sphere $\mathbb{S}^3_{\sg}$ of radius $\rho_{\sg} = \delta^{-1} \cos^{-1}\sg$ such that

 \begin{enumerate}
  \item  ${\cal F}_{0}^{\mathfrak g}$  is the Lawson surface $\xi_{1,{\mathfrak g}}$ of genus ${\mathfrak g}$.
\item For $\psi \rightarrow \pm \frac{\pi}{4}$, ${\cal F}_{\psi}^{\mathfrak g}$  smoothly converges to a doubly covered geodesic 2-sphere with $2{\mathfrak g} + 2$ branch points; the family  ${\cal F}_{\psi}^{\mathfrak g}$ cannot be extended in $\psi$ in the space of immersions.
\item ${\cal F}_{\psi}^{\mathfrak g} = {\cal F}_{- \psi}^{\mathfrak g}$ up to reparametrization and isometries of $\mathbb{S}^3_{\sigma}$,
and accordingly  the  constant mean curvatures satisfy ${\cal H}_{\psi}^{\mathfrak g} =  - {\cal H}_{- \psi}^{\mathfrak g}$.
 \item  ${\cal H}_{\psi}^{\mathfrak g}$ decreases strictly monotonically from zero
 at $\psi = 0$ to a minimum  ${\cal H}_{\psi_m}^{\mathfrak g} = -2 \delta  \sin \sg^{\mathfrak g}_m$ at some $\psi =  \psi_{m}$, from which it increases
 strictly monotonically to zero for  $\psi \rightarrow  \frac{\pi}{4}$. The monotonicity behaviour of  ${\cal H}_{\psi}^{\mathfrak g}$ for $\psi \in (-\frac{\pi}{4}, 0)$ is then determined by property 3.
\item The area of ${\cal F}_{\psi}^{\mathfrak g} $ increases strictly monotonically from
 $\psi = 0$ towards both $\psi = \pm \frac{\pi}{4}$ where it takes the values
 $ A_{\pm \pi/4} = 8 \pi \de^{-2} $.
 \end{enumerate}

\end{thm}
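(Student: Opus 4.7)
The plan is to embed the HHT family $f_\psi^\mathfrak{g}$ (which will be quoted as Theorem 3) into the spherical dS slices $\mathbb{S}^3_\sigma$ of (\ref{sph}), at an appropriately matched $\sigma = \sigma(\psi)$. Recall from Sect.\ 3.2.4 that within the umbilic slicing the MOTS condition $\theta^+ = 0$ on a surface contained in $\mathbb{S}^3_\sigma$ becomes $\mathcal{H} = -2\delta\sin\sigma$ on its mean curvature in the slice. Under a constant ambient rescaling $\widetilde g = \lambda^2 g$, mean curvatures transform as $H \mapsto H/\lambda$, and since $\mathbb{S}^3_\sigma$ carries the metric $\rho_\sigma^2 g_{\mathbb{S}^3}$ with $\rho_\sigma = \delta^{-1}\cos^{-1}\sigma$, an HHT surface of mean curvature $H_\psi^\mathfrak{g}$ in the unit $\mathbb{S}^3$ acquires mean curvature $H_\psi^\mathfrak{g}\,\delta\cos\sigma$ when placed in $\mathbb{S}^3_\sigma$. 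Matching to $-2\delta\sin\sigma$ yields the algebraic relation $H_\psi^\mathfrak{g} = -2\tan\sigma$, so I would set
\[
\sigma(\psi) \;=\; -\arctan\!\bigl(H_\psi^\mathfrak{g}/2\bigr),
\]
and define $\mathcal{F}_\psi^\mathfrak{g}$ as the composition of $f_\psi^\mathfrak{g}$ with the radial dilatation from unit $\mathbb{S}^3$ onto $\mathbb{S}^3_{\sigma(\psi)}$. Smoothness of $\sigma(\psi)$ and of $\mathcal{F}_\psi^\mathfrak{g}$ follow from the smoothness asserted in Theorem 3, and conformality of $\mathcal{F}_\psi^\mathfrak{g}$ is preserved because the target rescaling is by a constant conformal factor.

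Properties (1)--(4) would then follow quickly. Since the Lawson surface and the doubly covered great sphere are both minimal, $H_\psi^\mathfrak{g}$ vanishes at $\psi = 0$ and in the limits $\psi \to \pm\pi/4$, giving $\sigma(0) = 0$ and $\mathcal{F}_0^\mathfrak{g} = \xi_{1,\mathfrak{g}}$ in the slice $\sigma = 0$ (property 1), while $\sigma(\psi) \to 0$ as $\psi \to \pm\pi/4$ with $\mathcal{F}_\psi^\mathfrak{g}$ smoothly converging to the doubly covered geodesic 2-sphere sitting in $\mathbb{S}^3_0$ (property 2). Property (3) lifts verbatim from the $\psi \mapsto -\psi$ symmetry in Theorem 3 (with the sign flip of $H$ absorbed into an orientation-reversing isometry of the slice), and property (4) is then tautological once one notes $\mathcal{H}_\psi^\mathfrak{g} = -2\delta\sin\sigma(\psi)$ by construction and inherits monotonicity from that of $H_\psi^\mathfrak{g}$ on each half of $(-\pi/4,\pi/4)$.

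The substantive step is property (5), which rests on a pleasant rescaling identity. The area in dS is
\[
A_{dS}(\mathcal{F}_\psi^\mathfrak{g}) \;=\; \rho_\sigma^2\, A(f_\psi^\mathfrak{g}) \;=\; \delta^{-2}\sec^2\sigma(\psi)\, A(f_\psi^\mathfrak{g}),
\]
with $A(f_\psi^\mathfrak{g})$ the area in the unit $\mathbb{S}^3$. Using $\sec^2\sigma = 1 + \tan^2\sigma = 1 + (H_\psi^\mathfrak{g})^2/4$ this collapses to $A_{dS}(\mathcal{F}_\psi^\mathfrak{g}) = \delta^{-2}\,W(f_\psi^\mathfrak{g})$ with $W$ the Willmore energy (\ref{Wil}). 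Thus the strict monotonicity of $W$ along the HHT family, which will be part of Theorem 3, translates directly into the claimed strict monotonicity of $A_{dS}$. Evaluating at the limits, the doubly covered great 2-sphere has area $8\pi$ and vanishing mean curvature, hence $W = 8\pi$ and $A_{\pm\pi/4} = 8\pi\delta^{-2}$. I anticipate no genuine obstacle: the real work is in Theorem 3 itself (i.e.\ in \cite{HHT}), and the passage to dS reduces to the identification of Willmore energy with rescaled area displayed above.
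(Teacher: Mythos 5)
Your proposal is correct and follows essentially the same route as the paper: the slice $\sigma(\psi)$ is fixed by the matching condition $H_\psi^{\mathfrak g}=-2\tan\sigma$ (the paper leaves this implicit where you solve it as $\sigma=-\arctan(H_\psi^{\mathfrak g}/2)$), ${\cal F}_\psi^{\mathfrak g}$ is the composition of $f_\psi^{\mathfrak g}$ with the dilation ${\cal I}_{\sigma(\psi)}$, and item 5 rests on the same identity $A_{dS}=\delta^{-2}\sec^2\sigma\,A=\delta^{-2}W(f_\psi^{\mathfrak g})$ turning the Willmore monotonicity of Theorem 3 into area monotonicity. No gaps to report.
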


This theorem has the following obvious Corollary:

\begin{cor}
 For every
 ${\mathfrak g} \gg 1$, the smooth family of conformal embeddings constucted in Thm.\  \ref{family} determines a  smooth MTT whose MTS sections have properties 1-5.
\end{cor}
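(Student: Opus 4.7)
The plan is to assemble Theorem~\ref{family} into a single smooth parametrization of the candidate MTT. Working in coordinates~(\ref{sph}), define
\[
  \Phi:\Bigl(-\tfrac{\pi}{4},\tfrac{\pi}{4}\Bigr)\times{\cal R}^{\mathfrak g} \longrightarrow {\cal M},\qquad \Phi(\psi,p) = \bigl(\sigma(\psi),\,{\cal F}_\psi^{\mathfrak g}(p)\bigr),
\]
with the second entry understood as a point of $\mathbb{S}^3_{\sigma(\psi)}$ sitting inside the slice $\{\sigma=\sigma(\psi)\}$ of dS, and set ${\cal T}:=\Phi\bigl((-\pi/4,\pi/4)\times{\cal R}^{\mathfrak g}\bigr)$. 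By construction the $\psi$-sections of ${\cal T}$ are exactly the embeddings produced by Theorem~\ref{family}, so properties~1--5 are inherited at once.

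The next step is to verify the MOTS condition on each section. The slice $\mathbb{S}^3_\sigma$ is totally umbilic in dS, with extrinsic-curvature trace over any 2-surface equal to $2\delta\sin\sigma$, and the short computation collected in Lemma~2 of Sect.~3.2.1 gives the null expansions of any ${\cal F}\subset\mathbb{S}^3_\sigma$ of mean curvature $H$ as $\theta^\pm = \pm H + 2\delta\sin\sigma$, with outward null direction fixed as in Def.~\ref{mots}. Property~4 of Theorem~\ref{family} is precisely the identity ${\cal H}_\psi^{\mathfrak g} = -2\delta\sin\sigma(\psi)$, so $\theta^+\equiv 0$ and $\theta^- = 4\delta\sin\sigma(\psi)$ throughout ${\cal T}$; the latter is nonzero for $\psi\neq 0$, so every such section is a proper MTS, while the $\psi=0$ section is the Lawson minimal surface, marginal but not strictly trapped.

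The final step is to show that $\Phi$ is a smooth immersion, so that ${\cal T}$ is a smooth hypersurface. Smoothness of $\Phi$ follows directly from the smoothness of $\sigma(\psi)$ and of ${\cal F}_\psi^{\mathfrak g}$ in Theorem~\ref{family}. Transversality of $d\Phi$ reduces to the pointwise requirement $\partial_\psi\Phi\notin T{\cal F}_\psi^{\mathfrak g}$, which is automatic whenever $\sigma'(\psi)\neq 0$ because the component $\sigma'(\psi)\partial_\sigma$ is transverse to every slice. The main obstacle lies at the extrema $\psi=\pm\psi_m$ of $\sigma$, where $\sigma'(\psi_m)=0$ and ${\cal T}$ becomes tangent to $\mathbb{S}^3_{\sigma_m^{\mathfrak g}}$: there one has to show that the in-slice variation $X=\partial_\psi{\cal F}_\psi^{\mathfrak g}|_{\psi_m}$ has a nowhere-vanishing normal component $\phi$. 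Since ${\cal H}_\psi^{\mathfrak g}$ is stationary at $\psi_m$, $\phi$ satisfies the homogeneous Jacobi equation on the CMC surface ${\cal F}_{\psi_m}^{\mathfrak g}$, and the CMC first-variation formula combined with strict monotonicity of area (property~5) yields $A'(\psi_m) = -{\cal H}_{\psi_m}^{\mathfrak g}\int\phi\,dA \neq 0$, so $\phi$ is a non-trivial Jacobi field; its nowhere-vanishing would then follow from the analytic structure of the HHT deformation, or, if needed, by a local reparametrization of the family that absorbs tangential contributions into the surface coordinates.
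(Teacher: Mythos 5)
Your first two steps are exactly the paper's (implicit) argument: the Corollary is stated as an immediate consequence of Theorem~\ref{family}, because $\sigma(\psi)$ is \emph{defined} in the proof of that theorem by the matching condition ${\cal H}^{\mathfrak g}_{\psi}=\mp 2\delta\sin\sigma(\psi)$ (equation~(\ref{mcscal})), so that $\theta^{+}\equiv 0$ and $\theta^{-}=4\delta\sin\sigma(\psi)$ follow from Lemma~\ref{umb}, and properties 1--5 are carried over verbatim. Your observation that the $\psi=0$ leaf (the Lawson minimal surface, $\theta^{-}=0$) is not strictly an MTS in the sense of Def.~\ref{mots} is a fair point of care that the paper glosses over, just as it does for the Clifford torus at $\sigma=0$. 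One small misattribution: the identity ${\cal H}^{\mathfrak g}_\psi=-2\delta\sin\sigma(\psi)$ is not property~4 of Theorem~\ref{family} (which is the monotonicity statement); it is the defining relation for $\sigma(\psi)$ from the proof.

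Your third step goes beyond the paper, which does not address whether $\Phi$ is an immersion at the turning points $\psi=\pm\psi_m$; indeed the Discussion section only remarks that the tube ``must be spacelike near the turning points,'' implicitly assuming smoothness there. You correctly identify that the issue reduces to the normal component $\phi$ of $\partial_\psi{\cal F}^{\mathfrak g}_\psi$ at $\psi_m$ being nowhere zero, but your argument does not close it: (i) strict monotonicity of the area does not by itself give $A'(\psi_m)\neq 0$ (a strictly monotone function can have vanishing derivative at an interior point), so even $\int\phi\,dA\neq 0$ is not secured; (ii) even granting $\phi\not\equiv 0$, a non-trivial Jacobi field on ${\cal F}^{\mathfrak g}_{\psi_m}$ generically has a non-empty nodal set, and nowhere-vanishing is a genuinely stronger statement needing input from \cite{HHT}; and (iii) the fallback of ``a local reparametrization of the family'' cannot work, since precomposing $\Phi$ with a diffeomorphism of $(-\pi/4,\pi/4)\times{\cal R}^{\mathfrak g}$ leaves the rank of $d\Phi$ unchanged, so a rank drop at a zero of $\phi$ cannot be absorbed that way. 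So your write-up is faithful to (and more careful than) the paper on the main content, but the immersion claim at the fold remains an open step in your version --- one the paper itself silently assumes rather than proves.
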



In the final  Sect.\ \ref{dis}
we note that the monotonicity of the  area of the MTS-sections of MTTs (holographic screens) is the key ingredient in their thermodynamic interpretation along the lines of the "second law". We briefly discuss the area
law established in \cite{BE}.


\section{A class of spacetimes with no stable MOTSs}

\label{nost}

We start with definitions, setting out from Def. \ref{mots}.  We normalize the null vectors to satisfy $\lan \ell^+, \ell^- \ran = -2$. Capital  letters ($A,B,..$) denote  indices on objects on ${\cal F}$. The induced metric on ${\cal F}$ is denoted by $j$ or $j_{AB}$, while its Levi-Civita derivative by $D$ or $D_A$ and $\Delta_{j} := D_A D^A$ is the Laplacian and $R_j$ the scalar curvature.  The null second fundamental form of ${\cal F}$ with respect to $\ell^+$ is denoted by $k^{+}$ ($k_{AB}^+$) and the torsion one form $s ~(s_A)$ is defined by
\begin{align*}
  s (X) = - \frac{1}{2} \lan \ell^{-}, \nabla_X \ell^+ \ran  \quad \forall~ \mbox{vector fields} ~X~
  \mbox{on} ~ {\cal F}.
\end{align*}

\begin{rem}
\label{analogy}
The motivation and inspiration for the following definition and the results on (in)stability of MOTS
 (cf.\ Sects.\ 3-5 of \cite{AMS}) come from corresponding material on minimal surfaces of codimension 1 in manifolds with a Riemannian metric.
 We recall, however, that in the minimal surface case stability can be defined via the second variation of the area functional. This is no longer the case for MOTS, which entails substantial
 differences (cf.\   Remark \ref{var} below).
 \end{rem}
The key definitions for this section read as follows.

\begin{defn}[\bf Stability]~ We denote the Einstein tensor by $\mathrm{Ein}_g = \mathrm{Ric}_g - \frac{1}{2}{\mathrm R}_g g$.

\label{stab}
\begin{itemize}
 \item The stability operator $L$ of a MOTS ${\cal F}$ is defined for smooth functions $w$ by
\begin{align}
\label{Lw}
  L(w) =   - \Delta_{j} w + 2 s^A D_A w +
  \frac{1}{2} \Big (\mathrm{R}_{j}  - \mathrm{Ein}_g(\ell^+,\ell^-)
  -2 s^A s_A + 2 D_A s^A \Big )  w
\end{align}
\item $L$ admits a principal eigenvalue $\lambda$, which is the eigenvalue with smallest real part. This eigenvalue is necessarily real and its eigenspace is one-dimensional and spanned by an everywhere positive function
$\phi$, called {\it principal eigenfunction}.
\item
The MOTS ${\cal F}$ is called strictly stable (resp. stable, marginally stable or unstable) provided $\lambda > 0$
($\lambda \ge 0$,  $\lambda=0$ or $\lambda <0$).

\end{itemize}
\end{defn}

\begin{rem}
\label{var}
The stability operator is relevant for calculating the variation $\delta_{\xi} \theta^{+}$  of the expansion $\theta^{+}$ with respect to an arbitrary vector field $\xi$ normal to ${\cal F}$.
In Lemma 3.1 of \cite{AMS} it was shown that
\begin{align}
  \delta_{\xi} \theta^{+} = L(u) - \frac{1}{2} B Z
  \label{delth}
\end{align}
where  the functions $u,B$ and $Z$  are defined on ${\cal F}$ by $u := \lan \xi, \ell^+ \ran$, $B := - \lan \xi, \ell^{-} \ran$ and $Z:= k^{+}_{AB} k_{+}^{AB} + \mathrm{Ein}_g (\ell^+,\ell^+)$.

The  ``stability operator $L_v$ in the direction of a normal vector $v$"  introduced in
Def.3.1 of \cite{AMS} is related to (\ref{delth}) as follows:
\begin{align}
\label{Lv}
 \delta_{\xi} \theta^+ = L_v (u) ~~\mbox{for} ~ \xi= u\, v
 \end{align}
In particular, $L(u)$ as defined in (\ref{Lw}) agrees with $L_v(u)$ for $v = - \frac{1}{2}\ell^{-}$.

\end{rem}

The stability operator of a stable MOTS satisfies a maximum principle, cf.\ Lemma 4.2 of \cite{AMS}:

\begin{lem}
  \label{MaxPrin}
  Let ${\cal F}$ be a MOTS and $L$  the corresponding stability operator. Let $\lambda$ be the principal eigenvalue and
  $\phi>0$ the principal eigenfunction. Let $\psi$ be a smooth solution of
  $L \psi = f$ with $f \geq 0$. then
  \begin{itemize}
  \item[(i)] If $\lambda =0$, then $f =0$ and $\psi = C \phi$ for some constant $C$.
  \item[(ii)] If $\lambda > 0$  and
    $f$ is not identically zero then $\psi >0$.
  \item[(iii)] If $\lambda > 0$ and $L\psi = 0$, then $\psi =0$.
  \end{itemize}
\end{lem}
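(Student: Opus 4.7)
The plan is to reduce $L\psi = f$ to a linear elliptic equation with a nonnegative zero-order coefficient, to which the standard weak and strong maximum principles apply. Because $L$ is not self-adjoint (the torsion one-form $s$ produces a genuine drift term), the natural device is to factor out the strictly positive principal eigenfunction $\phi$ provided by Definition~\ref{stab}. Writing $\psi = v\phi$ and applying the Leibniz rule one obtains
\begin{align*}
L(v\phi) = \phi\bigl(-\Delta_j v + 2(s_A - \phi^{-1}D_A\phi)\,D^A v\bigr) + v\,L(\phi),
\end{align*}
and substituting $L\phi = \lambda\phi$ converts $L\psi = f$ into
\begin{align*}
-\Delta_j v + \tilde b^A D_A v + \lambda v = f/\phi,\qquad \tilde b_A := 2(s_A - \phi^{-1}D_A\phi),
\end{align*}
a scalar elliptic equation on the closed surface ${\cal F}$ whose zero-order coefficient is precisely $\lambda\ge 0$.

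For part (ii), with $\lambda>0$ and $f\ge 0$, $f\not\equiv 0$: at any interior minimum $p$ of $v$ one has $D_A v(p)=0$ and $\Delta_j v(p)\ge 0$, so the equation yields $\lambda v(p) = (f/\phi)(p) + \Delta_j v(p) \ge 0$, hence $v\ge 0$ everywhere. Were $v(p)=0$ at some $p$, Hopf's strong maximum principle would force $v \equiv 0$, but then the left-hand side vanishes while $f/\phi\not\equiv 0$, a contradiction; thus $v>0$ and $\psi>0$. For part (iii) with $f\equiv 0$, the same argument applied to both $v$ and $-v$ forces $v\equiv 0$, hence $\psi\equiv 0$.

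Part (i) with $\lambda = 0$ is the most delicate, since the maximum-principle argument above yields no sign information when the zero-order coefficient vanishes. Here the plan is to exploit the formal $L^2$ adjoint
\begin{align*}
L^* w = -\Delta_j w - 2 s^A D_A w + \bigl(\tfrac{1}{2}\bigl(R_j - \mathrm{Ein}_g(\ell^+,\ell^-) - 2 s^A s_A\bigr) - D_A s^A\bigr) w,
\end{align*}
whose principal eigenvalue coincides with $\lambda = 0$ and whose principal eigenfunction $\phi^*$ is strictly positive (Krein--Rutman). Pairing $L\psi = f$ with $\phi^*$ and using $L^*\phi^* = 0$ gives
\begin{align*}
\int_{\cal F}\phi^* f\,dA = \int_{\cal F}\phi^* L\psi\,dA = \int_{\cal F}\psi\,L^*\phi^*\,dA = 0,
\end{align*}
and since $\phi^*>0$ and $f\ge 0$ this forces $f\equiv 0$. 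Then $L\psi = 0$, and the stipulated one-dimensionality of the $\lambda$-eigenspace of $L$ yields $\psi = C\phi$.

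The main obstacle I anticipate is justifying the two spectral facts about $L^*$ needed in (i): that its principal eigenvalue equals that of $L$, and that it admits a strictly positive principal eigenfunction. Both follow from the Krein--Rutman theorem applied to the resolvent of $L$ (and of $L^*$) after a sufficiently large positive shift on the closed manifold ${\cal F}$, paralleling the analogous statements already built into Definition~\ref{stab} for $L$ itself. Once these are in hand, the whole proof reduces to the maximum-principle computation on the transformed equation for $v = \psi/\phi$.
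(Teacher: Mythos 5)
Your proof is correct. Note that the paper itself does not prove this lemma at all --- it simply cites Lemma 4.2 of \cite{AMS} --- and your argument (conjugating by the positive principal eigenfunction to obtain an elliptic equation with zero-order coefficient $\lambda\ge 0$, applying the weak and strong maximum principles for (ii) and (iii), and pairing with the positive principal eigenfunction of the formal adjoint via Krein--Rutman to handle the degenerate case (i)) is essentially the standard one given in that reference, so you have supplied in full what the paper only quotes.
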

From this lemma we can prove the following general result on the non-existence of stable MOTS:

\begin{prop}
  \label{deltath}
  Let ${\cal F}$ be a MOTS in an n-dimensional ($n \ge 3$) spacetime  $({\cal M},g)$ satisfying the null convergence condition. Assume that there exists a future causal vector field $\xi$ along ${\cal F}$ which is not everywhere proportional to $ \ell^+$ and  such that $\delta_{\xi} \theta^{+} \geq 0$. Then ${\cal F}$ cannot be strictly stable.
  Moreover, if $\delta_{\xi} \theta^{+}$ is positive somewhere then ${\cal F}$
  cannot be marginally stable either.
\end{prop}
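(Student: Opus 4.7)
The plan is to derive the sign inequality $L(u) \ge 0$ for the particular function $u := \langle \xi, \ell^+\rangle$ on $\mathcal{F}$, and then feed this into the maximum principle Lemma \ref{MaxPrin}. The starting point is the identity from Remark \ref{var}, rewritten as
\begin{align*}
L(u) \;=\; \delta_\xi \theta^+ \,+\, 2 B Z,
\end{align*}
where $B = -\langle \xi, \ell^-\rangle$ and $Z = k^+_{AB} k_+^{AB} + \mathrm{Ein}_g(\ell^+,\ell^+)$. What remains is to sign-check the right-hand side and to translate the non-proportionality hypothesis into a statement about $u$.

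Since $\xi$ is future causal and $\ell^\pm$ are future null, the standard causal inner-product inequalities give $u \le 0$ and $B \ge 0$ pointwise on $\mathcal{F}$. Decomposing the normal part of $\xi$ as $\xi = \tfrac{B}{2}\ell^+ - \tfrac{u}{2}\ell^-$ (using $\langle\ell^+,\ell^-\rangle = -2$), the condition "$\xi$ proportional to $\ell^+$" at a point is equivalent to $u = 0$ there, so the hypothesis "not everywhere proportional to $\ell^+$" is just $u \not\equiv 0$. The quantity $Z$ is nonnegative as well: $k^+_{AB} k_+^{AB} \ge 0$ because $j$ is Riemannian, and $\mathrm{Ein}_g(\ell^+,\ell^+) = \mathrm{Ric}_g(\ell^+,\ell^+) \ge 0$ by the null convergence condition (the $R_g\,g$ contribution vanishes on a null vector). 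Together with $B \ge 0$ and the hypothesis $\delta_\xi \theta^+ \ge 0$, the identity displayed above yields $L(u) = f$ with $f \ge 0$ on $\mathcal{F}$.

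I now apply Lemma \ref{MaxPrin} with $\psi := u$. If $\mathcal{F}$ were strictly stable ($\lambda > 0$), then in the subcase $f \not\equiv 0$ part (ii) would force $u > 0$, contradicting $u \le 0$; in the subcase $f \equiv 0$ part (iii) would force $u \equiv 0$, contradicting $u \not\equiv 0$. This gives the first assertion. For the second, the extra hypothesis that $\delta_\xi \theta^+$ is positive somewhere, combined with $2BZ \ge 0$, ensures that $f$ itself is not identically zero. Were $\mathcal{F}$ marginally stable ($\lambda = 0$), part (i) applied to $Lu = f$ with $f \ge 0$ would force $f \equiv 0$, a contradiction. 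The only delicate points I anticipate are the sign bookkeeping connecting the causal character of $\xi$ to $u \le 0$ and $B \ge 0$, and the verification that the normal-bundle decomposition genuinely identifies "$\xi \propto \ell^+$" with "$u = 0$" — neither is a real obstacle, but both must be pinned down before invoking Lemma \ref{MaxPrin}.
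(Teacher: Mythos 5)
Your proof is correct and follows essentially the same route as the paper's: rewrite the variation identity as $L(u)=\delta_\xi\theta^++2BZ$, establish $u\le 0$, $B\ge 0$, $Z\ge 0$ and $u\not\equiv 0$ from causality, non-proportionality and the null convergence condition, and then apply parts (ii)/(iii) of Lemma \ref{MaxPrin} for strict stability and part (i) for marginal stability. The only difference is that you spell out the sign of $Z$ and the normal-bundle decomposition explicitly (noting, as you should, that causality of $\xi$ is what makes $u=0$ equivalent to $\xi\propto\ell^+$), which the paper leaves implicit.
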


\begin{proof} The basis $\{ \ell^+,\ell^-\}$ is future directed, so $\xi$ being future causal implies $u \leq 0$ and $B \geq  0$. Moreover $u$ is not identically zero, because $\xi$ is not everywhere proportional to $\ell^+$. Together with Equ. (\ref{delth}) and the  hypotheses of the proposition this implies $L(u) \geq 0$. If ${\cal F}$ is strictly stable (i.e. $\lambda >0)$, items (ii) and (iii) of Lemma \ref{MaxPrin} imply that $u$ is either strictly positive or identically zero, which is a contradition. This proves the first claim of the proposition. Finally, if ${\cal F}$ is marginally stable then  item (i) of the Lemma   yields $\delta_{\xi} \theta^{+} + \frac{1}{2} B Z =0$, which cannot happen if  $\delta_{\xi} \theta^{+}$ is positive somewhere.
  \end{proof}

 We are now ready to prove Thm.\  \ref{thmConf} stated in the Introduction. We use greek indices
 on spacetime objects and sum over repeatedly occurring ones.

\begin{proof}[Proof of Thm.\  1]
    In Corollary 1 of \cite{CM} the following general identity is
  proved for variations of $\theta^{+}$ of a MOTS along an arbitrary vector
  field $\xi$ in terms of the deformation tensor $a(\xi)
:= \pounds_{\xi} g$ of
  $\xi$
  \begin{align}
    \delta_{\xi} \theta^{+} = -\frac{1}{4} \theta^{-} a(\xi)_{\alpha\beta}
    \ell^{\alpha}_+ \ell^{\beta}_+
    - a(\xi)_{\alpha\beta} e_A^{\alpha} e_B ^{\beta}
    k_{+}^{AB}
    + j^{AB} e_A^{\alpha} e_B^{\beta} \ell^{\nu}_+
    \left (  \frac{1}{2} \nabla_{\nu} a(\xi)_{\alpha\beta}
    - \nabla_{\alpha} a(\xi)_{\beta\nu}    \right )
    \label{identity}
  \end{align}
  where $\{e^{\alpha}_A\}$ is any basis of tangent vectors to ${\cal F}$.
    Assume now that $\xi$ is a conformal Killing vector, so that
  $a(\xi) = 2 \Psi g$. Inserting into \eqref{identity} yields
  \begin{align*}
    \delta_{\xi} \theta^{+} =
    j^{AB} e_A^{\alpha} e_B^{\beta} \ell^{\nu}_+
    \left ( \nabla_{\nu} \Psi g_{\alpha\beta}
    - 2 \nabla_{\alpha} \Psi g_{\beta\nu} \right )
    = (n-2) \ell^{\nu}_+ \nabla_{\nu} \Psi.
  \end{align*}
  Since $\ell^+$ is future directed and $\Psi$ is a temporal function, we have
  $\delta_{\xi} \theta^{+} >0$ and the result is a consequence of Proposition
  \ref{deltath}.
\end{proof}

\begin{rem}
\label{outer}
Recall that MOTS were defined in  Def. \ref{mots} by the vanishing of (at least) one of the null expansions. As the  ``outer"  amendment apparently plays no role, neither in that
Definition nor in the discussion above, the simpler notion ``marginal surface'' (introduced by Hayward \cite{SH}) would  suffice. To indicate  some benefit of the ``MOTS" terminology, we recall
that in Def.3.1 of \cite{AMS}, stability of the MOTS with respect to any direction $v$
was defined  by requiring that the operator $L_v(w)$ (Equ. (\ref{Lv}))  has non-negative lowest eigenvalue. This is equivalent to non-negativity of either side of (\ref{Lv}) provided
$\ell^+$ points in the same direction as  $\xi = u\,v$, viz. $u = \lan \xi, \ell^{+} \ran \ge 0$. Calling these directions  ``outer"  (or ``inner")  is just more  efficient terminology
than some ``... points in the same (or opposite) direction..." amendment. The next section provides corresponding ``unstable examples".

\end{rem}

\begin{ex}[The Friedman-Lemaitre-Robinson-Walker Universe (FLRW)]

Before focusing on dS in the next section, we apply Theorem 1 to n-dim. FLRW, viz.

\begin{equation}
 ds^2 = -dt^2 + a(t)^2 \left( \frac{dr^2}{1 - \kappa r^2} + r^2 d\Omega^2 \right)
 \end{equation}
\noindent
 where $d\Omega^2$ is the spherical metric in $n-2$ dimensions and $\kappa \in \{-1,0,1\}$.

The requirements that $\Psi$ is  a temporal function for the conformal
Killing field $\xi=a \partial_t$ and the null convergence condition translate into the
following conditions on the scale factor $a(t)$, respectively

\begin{equation}
\label{adot}
\ddot{a} >0,  \qquad \ddot{a} \leq \frac{\dot{a}^2 + \kappa}{a}.
\end{equation}

Here  dot is the derivative with respect to $t$. If conditions (\ref{adot}) hold, all MOTS are unstable.

We remark in this context that MOTS in FLRW were investigated e.g. in \cite{FHO,MX}. In the latter paper it was found that in closed ($\kappa = 1$) 4-dim. FLRW, a certain family of MOTS called CMC Clifford tori are always unstable. In the next section we will analyze systematically
families of CMC-MOTS in umbilic slices of dS.

\end{ex}

\section{MOTSs and MOTTs in de Sitter spacetime}

\subsection{Instability of MOTS}

The following easy consequence of Thm.\  \ref{thmConf} was already sketched in the Introduction.
\begin{cor}
\label{dsunst}
  All MOTS in de Sitter spacetime are unstable.
\end{cor}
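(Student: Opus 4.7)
The plan is to apply Theorem \ref{thmConf} directly, using the conformal Killing field $\xi = \partial/\partial \sigma$ associated with the representation (\ref{sph}) of $g_{\mbox{\tiny dS}}$. There are three hypotheses to verify: the null convergence condition, that $\xi$ is future directed timelike conformal Killing, and that the resulting conformal factor $\Psi$ is a temporal function.

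The null convergence condition is immediate: since dS satisfies $\mathrm{Ric}_g = (n-1)\delta^2 g$, contracting with any null vector $\ell$ gives $\mathrm{Ric}_g(\ell,\ell) = (n-1)\delta^2 \lan \ell,\ell\ran = 0$, so the condition holds with equality. The causal character of $\xi$ is clear from (\ref{sph}): one reads off $g_{\mbox{\tiny dS}}(\partial_\sigma,\partial_\sigma) = - (\delta \cos\sigma)^{-2} < 0$, and the convention that $\sigma$ increases to the future fixes $\xi$ as future directed timelike.

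For the conformal Killing property, I would write $g_{\mbox{\tiny dS}} = \Omega^2 \wt g$ with $\Omega = (\delta \cos\sigma)^{-1}$ and $\wt g = - d\sigma^2 + g_{\mathbb{S}^{n-1}}$ the Einstein static metric, for which $\partial_\sigma$ is a Killing field. Then
\begin{align*}
\pounds_\xi g_{\mbox{\tiny dS}} = (\partial_\sigma \Omega^2)\, \wt g = 2(\partial_\sigma \log \Omega)\, g_{\mbox{\tiny dS}} = 2\tan\sigma \; g_{\mbox{\tiny dS}},
\end{align*}
so $\xi$ is conformal Killing with $\Psi = \tan\sigma$. To check that $\Psi$ is temporal, I would compute the gradient using $g^{\sigma\sigma} = -\delta^2 \cos^2\sigma$:
\begin{align*}
\nabla \Psi = g^{\sigma\sigma} (\partial_\sigma \tan\sigma)\, \partial_\sigma = -\delta^2 \cos^2\sigma \cdot \sec^2\sigma\; \partial_\sigma = -\delta^2 \,\partial_\sigma,
\end{align*}
which is manifestly timelike and past directed.

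With the three hypotheses in place, Theorem \ref{thmConf} yields the conclusion that no stable MOTS exist in dS. There is no real obstacle here; the only point requiring minimal care is ensuring the signs are correct in identifying $\nabla \Psi$ as past directed, which follows from the inverse metric having $g^{\sigma\sigma} < 0$ together with $\partial_\sigma \tan\sigma > 0$.
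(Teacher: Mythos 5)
Your proposal is correct and follows essentially the same route as the paper: both apply Theorem \ref{thmConf} with $\xi = \partial_\sigma$ and $\Psi = \tan\sigma$ in the coordinates (\ref{sph}). Your version merely spells out the verification of the null convergence condition and the explicit gradient computation showing $\nabla\Psi = -\delta^2\partial_\sigma$ is past directed timelike, which the paper leaves implicit.
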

\begin{proof}

 In terms of the coordinates (\ref{sph}), $\xi = \frac{\partial}{\partial \sg}$ is a future directed, timelike conformal Killing vector field  satisfying
  \begin{align*}
    \pounds_{\xi} g_{\mbox{\tiny dS}} = 2 \tan \sigma \, g_{\mbox{\tiny dS}}.
  \end{align*}
   Since $\tan$ is an increasing function and
  $\sigma$ is a temporal function, so is $\Psi = \tan \sigma$. Hence Thm.\
  \ref{thmConf} implies the claim.
\end{proof}

\begin{rem}
\label{anal}
There is a certain analogy between de Sitter spacetime on which we focus in this section and round spheres, in the sense that both are maximally symmetric spaces of constant positive curvature.
Recalling also the analogy between the stability definitions for minimal surfaces and MOTS (cf.\  Remark \ref{analogy}),  Corollary \ref{dsunst} above can be seen as a counterpart to a result of  \cite{JS} (namely case $p = n-1$ of Thm 5.1.1)  that all minimal surfaces of codimension 1 on the round ${\mathbb S}^n$ are unstable. However,
there is no obvious analogy between the proofs, and we do not make any attempt of establishing
one here.
\end{rem}

\subsection{MOTS in umbilic slicings}

From now on we restrict ourselves to dS in 3+1 dimensions.
Our aim is to locate MOTTs in the three umbilic and  constant curvature slicings of de Sitter, namely flat and hyperboloidal and spherical. Our  focus will be on the latter case.

\subsubsection{Preliminaries}

We consider a spacelike hypersurface $({\cal N}, h, K)$ with metric $h$, scalar curvature ${}^3 R$, covariant derivative  ${}^3 \nabla$,  extrinsic curvature $K$ (w.r.t.\ a future-pointing unit normal $N$) and mean curvature $tr_h K$  (i.e.\ again with "physics convention", cf.\ Def.\ \ref{mc}). The constraints take the form
\begin{eqnarray}
 R_h - |K|_h^2 + (tr_h K)^2 &= &6 \delta^2  \label{ham}  \\
div_h \left(K  - (tr_h K) h  \right) & = & 0.   \label{mom}
\end{eqnarray}

\begin{lem} \label{umb} Let $({\cal N}, h,K)$ be an umbilic slice in de Sitter, i.e. the extrinsic curvature satisfies $K = \beta h$
for some function $\beta$. Then $\beta$  is constant on ${\cal N}$. Furthermore,
 any MOTS ${\cal F} \subset {\cal N}$ with
 $\theta^{+} = 0$  is a MTS and CMC as well, with mean curvature $H = div_{h} X = - 2 \beta$
where $X$ is a unit outward normal vector to ${\cal F}$ in ${\cal N}$, i.e.\ $\lan X, \ell^+ \ran > 0$.
\end{lem}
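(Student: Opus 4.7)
The plan is to address the two claims of Lemma \ref{umb} separately, each by a direct computation.

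For the first claim (constancy of $\beta$), I would substitute the umbilicity ansatz $K = \beta h$ directly into the momentum constraint \eqref{mom}. In four spacetime dimensions, ${\cal N}$ is three-dimensional, so $\mathrm{tr}_h K = 3\beta$ and therefore
\begin{equation*}
K - (\mathrm{tr}_h K)\, h = \beta h - 3\beta h = -2\beta h.
\end{equation*}
The momentum constraint then reduces to $-2\,{}^3\nabla \beta = 0$, which gives $\beta = \mathrm{const}$ on ${\cal N}$. The Hamiltonian constraint \eqref{ham} is not used at this step.

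For the second claim, I would fix the null frame by setting $\ell^+ = N + X$ and $\ell^- = N - X$, where $N$ is the future unit timelike normal of ${\cal N}$ in ${\cal M}$ and $X$ is the unit outward normal of ${\cal F}$ in $({\cal N},h)$. Then $|N|^2 = -1$ and $|X|^2=+1$ together with $\langle N,X\rangle = 0$ yield $\langle \ell^+,\ell^-\rangle = -2$, and $\langle X,\ell^+\rangle = 1 > 0$ confirms that $\ell^+$ is indeed the outer null direction. Splitting $\nabla_A \ell^\pm = \nabla_A N \pm \nabla_A X$ for $e_A$ tangent to ${\cal F}$ and tracing with $j^{AB}$, the Gauss formula (applied once for the embedding ${\cal N}\hookrightarrow {\cal M}$ and once for ${\cal F}\hookrightarrow {\cal N}$) produces the standard decomposition
\begin{equation*}
\theta^\pm = H \pm \mathrm{tr}_{\cal F} K,
\end{equation*}
where $H = \mathrm{div}_h X$ is the mean curvature of ${\cal F}$ as a hypersurface in $({\cal N},h)$ in the physics convention of Def.\ \ref{mc}. (Note $\mathrm{div}_h X = \mathrm{div}_j X$ on ${\cal F}$ since $|X|_h = 1$ implies $\langle X,\nabla_X X\rangle = 0$.) Under umbilicity, $\mathrm{tr}_{\cal F} K = j^{AB}\beta j_{AB} = 2\beta$. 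The MOTS condition $\theta^+ = 0$ then forces $H = -2\beta$, which is constant by the first claim, so ${\cal F}$ is CMC. Moreover $\theta^- = H - 2\beta = -4\beta$ is a non-zero constant whenever $\beta \ne 0$, so ${\cal F}$ is a MTS.

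The only step requiring genuine attention is the derivation of $\theta^\pm = H \pm \mathrm{tr}_{\cal F} K$ with signs consistent with the paper's conventions. This is a routine Gauss-type calculation, but it has to be performed once carefully with fixed sign conventions; after that, everything else is a one-line substitution. I would also flag the degenerate case $\beta = 0$ (a totally geodesic slice), in which the same reasoning still gives $H=0$ and hence CMC, but now $\theta^- = 0$ as well, so ${\cal F}$ is a minimal surface rather than strictly a MTS. In this sense the MTS conclusion in the lemma is to be read with the implicit understanding $\beta \ne 0$.
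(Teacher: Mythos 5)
Your overall strategy is exactly the paper's: constancy of $\beta$ from the momentum constraint \eqref{mom} (the paper states this in one line; your computation $K-(\mathrm{tr}_h K)h=-2\beta h$ is the intended one), and the expansion formula from the frame $\ell^{\pm}=N\pm X$ with $\langle \ell^+,\ell^-\rangle=-2$. Your closing remark about the degenerate case $\beta=0$ (minimal rather than marginally trapped) is a legitimate caveat that the paper leaves implicit in Def.~\ref{mots}.

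However, there is a sign error in your key decomposition, and it is internally inconsistent with your own frame choice. From $\nabla_A\ell^{\pm}=\nabla_A N\pm\nabla_A X$, tracing over ${\cal F}$ puts the $\pm$ on the term coming from $X$, i.e.\ on $H=\mathrm{div}_h X$, while the $N$-term $\mathrm{tr}_{\cal F}K$ enters both expansions with the same sign. The correct formula (the paper's Equ.~\eqref{ta3}) is $\theta^{\pm}=\pm H+\mathrm{tr}_j K=\pm H+2\beta$, not $\theta^{\pm}=H\pm\mathrm{tr}_{\cal F}K$. You escape unharmed on the main conclusion because $\theta^{+}$ is the same in both versions, so $\theta^{+}=0$ still gives $H=-2\beta$ and the CMC claim; and $\theta^{-}\neq 0$ for $\beta\neq 0$ either way, so the MTS claim survives. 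But your value $\theta^{-}=-4\beta$ has the wrong sign: the correct one is $\theta^{-}=-H+2\beta=4\beta$, as stated in the paper. This sign matters downstream, e.g.\ it is what makes Equ.~\eqref{msph} and the subsequent identification of the $\theta^{+}=0$ versus $\theta^{-}=0$ families come out consistently, so you should redo the trace once carefully rather than treat it as a convention ambiguity.
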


\begin{proof}

The constancy of $\beta$ follows from the constraint (\ref{mom}). For the second statement, we use the normalisation $\lan \ell^+, \ell^- \ran = -2$ and the decomposition $\ell^{\pm} = N \pm X$. This induces the following decomposition of  $\theta^{\pm}$
\begin{equation}
\label{ta3}
\ta^{\pm}  =  \pm H + tr_{j} K =\pm  H     +  2 \beta
\end{equation}
which implies the assertion, with  $\ta^{-} = 4 \beta$ on the MOTS $\theta^{+} = 0$.
\end{proof}

\begin{rem}
\label{bcm}
In the following subsections we consider as examples umbilic slicings $({\cal N}, h,K)$ of constant curvature (i.e. $h$ is flat, spherical or hyperboloidal) in dS. In this special situation, a recent result of \cite{BCM} implies
instability of all MOTS ${\cal F}$ contained in such slices. The argument goes as follows.
The spaces ${\cal N}$ in question are homogeneous in the sense that their isometry groups
act transitively. In particular, for a given MOTS ${\cal F}$ any pair of points
$p \in{\cal F}$ and $q \not\in {\cal F}$ can be joined by a group element $\rho$, i.e.  $q = \rho(p)$.
Hence the  Killing vector $\xi$ of ${\cal N}$ tangent to this orbit cannot be everywhere
tangent to ${\cal F}$. Now by Thm.\  1.6 of \cite{BCM} or by Thm. 8.1 of \cite{AMS}, ${\cal F}$ is either unstable or marginally stable, with $\xi$  nowhere tangent to ${\cal F}$ in the latter case. But
this case is ruled out by Thm. 5.1 of \cite{BCM} since by Lemma \ref{umb} above, $H = -2 \beta =$ const. on an  umbilic slice.
\end{rem}

While the results of \cite{BCM} also  cover other situations than the aforementioned one
(as discussed in that paper), there is not much further overlap with the  general settings considered in Proposition \ref{deltath} and Corollary \ref{dsunst}. Of course, the latter results apply to umbilic slices.

These slicings are conveniently defined in terms of the dS hyperboloid
\begin{align}
\label{hyper}
 -x_0^2 +  x_{\al} x_{\al} = \delta^{-2}  \qquad \al,\be \in \{ 1,2,3,4\}.
\end{align}
embedded in  5d Minkowski space
\begin{align}
ds^2 = -dx_0^2 + dx_{\al} dx_{\al}
\end{align}
 Above and henceforth, repeated indices are summed over. A useful reference
 for the slicings considered below is Sect. 4  of \cite{GP}.
\subsubsection{Flat slicing}
\label{flat}
In terms of the coordinate transformation
\begin{eqnarray}
 x_0 &=& \delta^{-1} \sinh \delta t + \frac{\delta}{2} r^2 e^{\de t} \\
 x_1 &=& \de^{-1} \cosh \de t - \frac{\delta}{2} r^2 e^{\de t}\\
 x_i &=& e^{\de t} y_i \qquad (i \in \{2,3,4\}) \qquad r^2 = y_iy_i
\end{eqnarray}
and upon introducing polar coordinates on $\mathbb{S}^2$, the induced metric on the hyperboloid (\ref{hyper}) reads
\begin{equation}
\label{fs}
ds^2 = -dt^ 2 + e^{2 \de t} [dr^2 + r^2(d\vt^2 + \sin^2 \vt d\phi^2)] \quad r \in [0,\infty),~ \vt \in [0,\pi], ~\phi \in [0,2\pi]
\end{equation}    
Clearly, the  $t = $const. [$t \in (-\infty, \infty)$] slicing is flat and umbilic (with $\be = \de$) but covers only half of dS as described by (\ref{sph}).

In order to determine the MOTS  in these slices, we recall that closed
CMC surfaces embedded in flat 3-space must necessarily be round spheres \cite{ADA}.

A sphere of  radius $r=R$ in a  slice $t= T$ obviously has the induced metric
\begin{equation}
ds_R ^2 = e^{2 \de T} R^2(d\vt^2 + \sin^2 \vt d\phi^2).
\end{equation}
We now choose the counter-intuitive label   {\it ``outgoing''}  for
the normal vector  $X$ pointing towards {\it decreasing} $r$, viz.:
\begin{equation}
\label{no}
X = - e^{ - \de T} \frac{\partial}{\partial r}.
\end{equation}
We find for the mean curvature w.r.t.\ $X$
\begin{equation}
H = div_h X=  - \frac{2}{R} e^{-\de T}.
\end{equation}
  It is only with this choice that (\ref{ta3}) can be solved to determine the location of a marginally {\it outer} trapped tube with MOTS sections $\theta^+ = 0$, namely:

   \begin{equation}
     e^{-\de T}  =   \de  R    \label{mott}
    \end{equation}
    in consistency with Def.\   \ref{mots}. We note that $H = - 2\de = const.$ along the MOTT.



\subsubsection{Hyperboloidal slicing}
\label{hyp}

We perform the coordinate transformation
\begin{eqnarray}
 x_0 &=& \delta^{-1} \sinh \delta t  \cosh r\\
 x_1 &=& \de^{-1} \cosh \de t \\
 x_i &=& \de^{-1} z_i \sinh \de t \sinh r \qquad (i \in \{2,3,4\}) \qquad  z_iz_i = 1
\end{eqnarray}
which yields for the induced metric on the hyperboloid (\ref{hyper})
\begin{align}
\label{hs}
ds^2 = -dt^2 + \delta^{-2} \sinh^2 (\delta t) [dr^2 + \sinh^2 r (d\vt^2 + \sin^2
\vt d\phi^2)]
\end{align}
where $t\in(-\infty,\infty)$, $r \in [0, \infty)$ and $\vt \in [0,\pi], ~\phi \in [0,2\pi)$.
This slicing is umbilic with $\be = \delta \coth \delta T$ for $t = T =$const.;
again it covers only part of dS given by (\ref{sph}), cf. Sect. 4.4.2 of \cite{GP}.

Here we restrict ourselves to calculating round CMC spheres; as to more general shapes cf.\  Remark \ref{Lc} below. Then the discussion becomes quite similar to the flat case of the previous subsection. The induced metric on slices $t = T =$const., $ r =R =$const. reads
\begin{align*}
ds_R^2 =  \delta^{-2} \sinh^2 (\delta T)  \sinh^2 R (d\vt^2 + \sin^2
\vt d\phi^2)
\end{align*}

We now take as {\it outgoing} unit normal vector to the $R= const$ slices
\begin{align*}
 X = - \delta \sinh^{-1} (\delta T) \frac{\partial}{\partial r}
 \end{align*}
 which yields for the mean curvatures of the MOTSs,
\begin{equation}
H =  div_h X=  - 2 \delta \sinh^{-1} (\delta T) \coth R.
\end{equation}

It then follows from (\ref{ta3}) that the  MOTT with spherical sections
  are determined by the equation
   \begin{equation}
   \cosh \delta T =  \coth R   \label{mott}.
    \end{equation}

For the mean curvatures of the MOTS sections of these  MOTTs we obtain
\begin{align}
\label{Hbd}
H = -2 \de \cosh R \le -2 \de
\end{align}

\begin{rem}
 \label{Lc}
 By the maximum principle, the mean curvature of a closed CMC surface embedded in
 the unit hyperboloid $\mathbb{H}^3$ must satisfy either $H \ge 0$ or $H \le -2$ (cf.\  e.g. \cite{DIK}), in consistency with (\ref{Hbd}). We also recall
 the ``Lawson correspondence" \cite{HBL,DIK}, which is a locally bijective relation between CMC surfaces in 3-dim space forms. In particular, CMC surfaces in the unit hyperboloid $\mathbb{H}^3$ with $H \le -2$ have corresponding CMC or minimal surfaces in flat space or in the unit sphere $\mathbb{S}^3$.
 We  restrict the detailed discussion to the latter case.
\end{rem}

\subsubsection{Complete spherical slicing}

\label{css}

Introducing  the coordinates
\begin{eqnarray}
 x_0 &=& \delta^{-1} \sinh \delta t\\
 x_\alpha &=& \de^{-1} z_{\al} \cosh \de t  \qquad (\al \in \{1, 2,3,4\}) \qquad  z_{\al}z_{\al} = 1
 \end{eqnarray}
we obtain
\begin{align}
ds^2 = -dt^2 + \delta^{-2} \cosh^2 (\delta t) [d\tau^2 + \sin^2 \tau (d\vt^2 + \sin^2
\vt d\phi^2)] \qquad \tau \in [0, \pi]
\end{align}
This is equivalent to (\ref{sph}) where the time coordinates are related by
$\tan \frac{\sigma}{2} = \tanh \frac{\delta t}{2} $. We proceed the discussion in terms of (\ref{sph}).

The   extrinsic and mean curvatures  of the surfaces $\sg = const$ read
\begin{equation}
 K =  (\de \sin \sg) \, h \qquad tr_h K =  3 \de \sin \sg.
\end{equation}

Therefore $\be = \de \sin \sg$ in the notation of Lemma \ref{umb}. In contrast to the previous slicings Sects.\ \ref{flat} and \ref{hyp}, we can now solve (\ref{ta3}) for either sign which yields
\begin{align}
\label{hcmc}
H = \mp 2 \de \sin \sg.
\end{align}

\noindent {\bf 3.2.4.1 Spherical MOTS} \\~

In this and the following subsection 3.2.4.2 our presentation largely follows \cite[Sect.\ 5]{BBS} and \cite[Ch.\ 5]{RB}.

In polar coordinates (\ref{sph}) takes the form
\begin{equation}
ds^2  =  \de^{-2}\cos^{-2}\sigma\left[ d\tau^ 2  + \sin^2 \tau \left(d \vt^2 +
\sin^ 2 \vt d\varphi^2 \right) \right];  ~~\tau, \vt \in [0, \pi], \varphi \in [0, 2\pi)
\end{equation}
For round 2-spheres  $\tau = const.$ we obtain from (\ref{ta3}):
\begin{equation}
\label{msph}
\theta^{\pm}  = 2 \de [\pm \cos \sigma \cot \tau + \sin \sigma].
\end{equation} 
 
The MOTSs  given by $\tau_0^{\pm} = \pm \sigma_0 + \frac{\pi}{2}$
are spheres with radius $\de^{-1}$.

 \begin{figure}[h!]

\begin{psfrags}
\psfrag{b}{\Huge $\bullet$}
\psfrag{tau}{\Huge $\tau$}
\psfrag{si}{\Huge $\sigma$}
\psfrag{Tp}{\Huge ${\cal T}^+$}
\psfrag{Tm}{\Huge ${\cal T}^ -$}
\psfrag{pi}{\Huge $\pi$}
\psfrag{pph}{\Huge $\pi/2$}
\psfrag{mph}{\Huge $-\pi/2$}
\psfrag{0}{\Huge $0$}
\psfrag{Jp}{\Huge ${\cal J}^+$}
\psfrag{Jm}{\Huge ${\cal J}^-$}
\includegraphics[angle=0,totalheight=5cm]{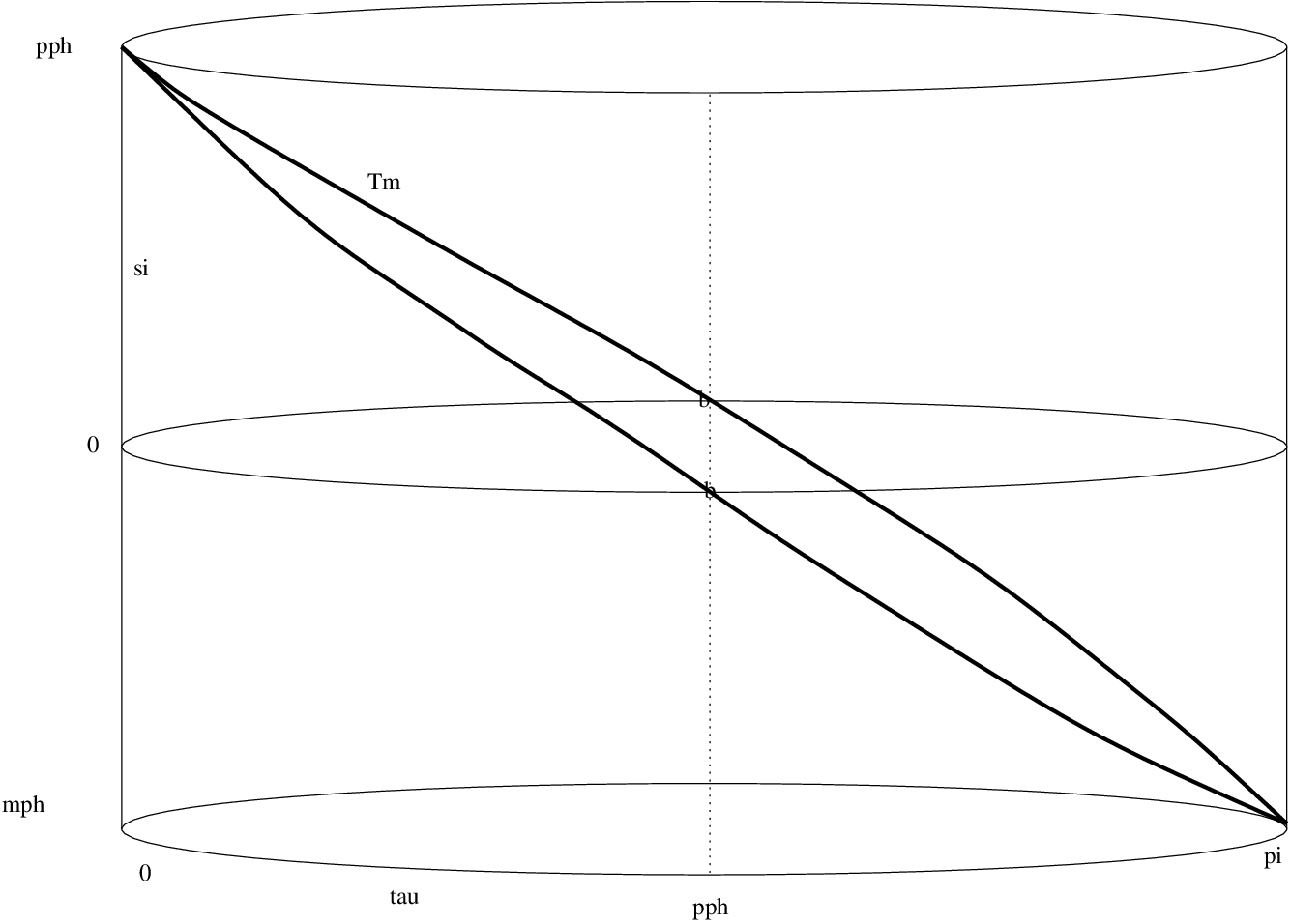}
\hspace*{1.5cm}
\includegraphics[angle=0,totalheight=5cm]{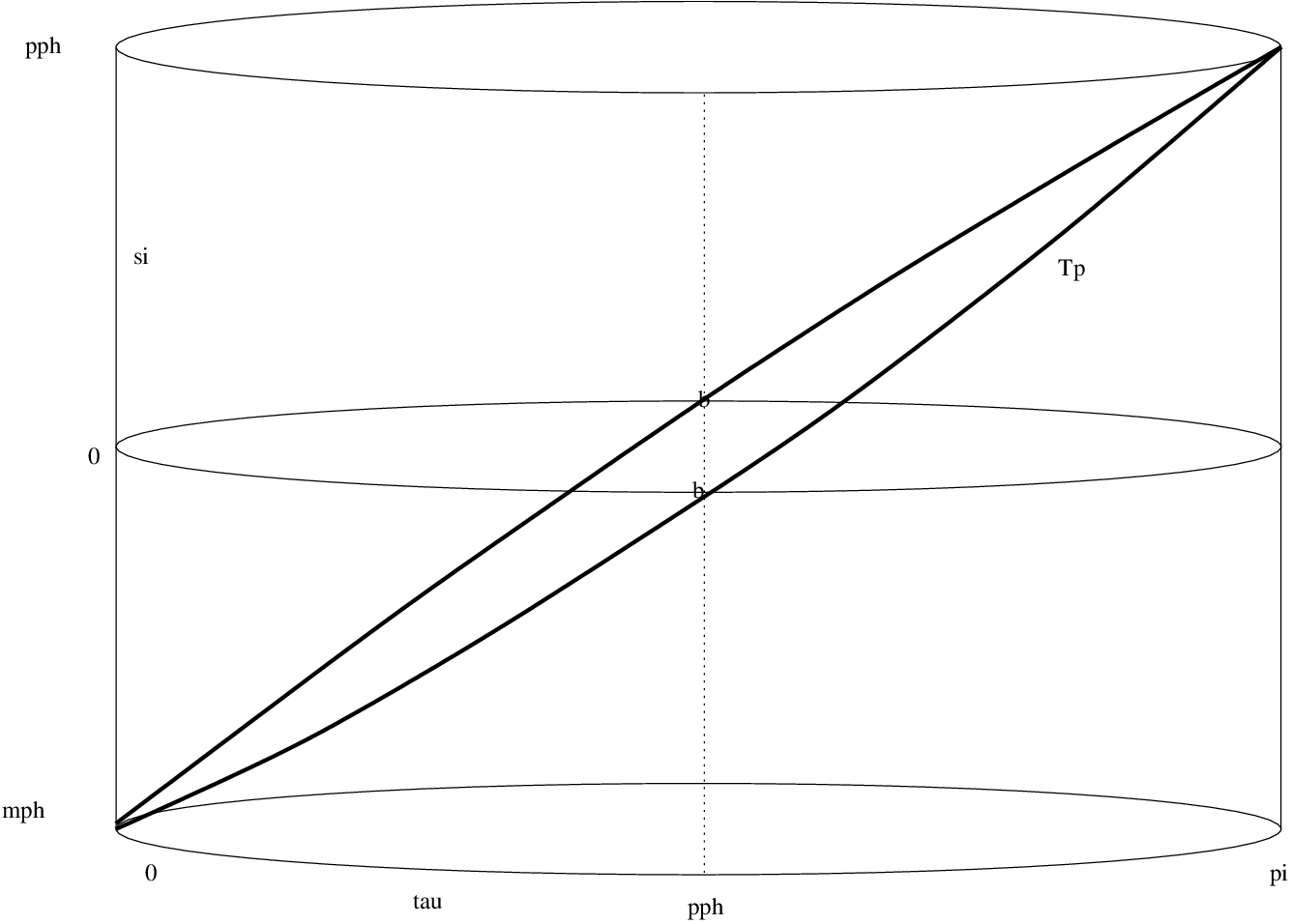}
\caption{The MOTTs ${\cal T}^-$ (left) and ${\cal T}^+$ (right) with spherical CMC sections}
 \label{sphsect}
\end{psfrags}
\end{figure}

Fig. 1 shows, via  a conformal rescaling ("Einstein cylinder", conformal factor $\de^{-2} \cos^{-2}\sigma$) the MOTTs
${\cal T}^{\pm}$ determined by the two families of MOTSs. ${\cal T}^{\pm}$  are null surfaces. The  $\varphi$ and $\vt$ directions are suppressed, hence spheres reduce to two points; the  dots at $\sg = 0$ correspond to the equators on $\mathbb{S}^3$ while the conical convergence towards $\sg = \pm \frac{\pi}{2}$ is due to the conformal factor. The drawing is sketchy (i.e.\ not a faithful result of calculation). \\

\noindent {\bf 3.2.4.2 Toroidal MOTS}\\

We recall the toroidal foliation of $\mathbb{S}^3$ in terms of  the following coordinates.
\begin{equation}
ds^2 = \de^{-2}\cos^{-2} \sigma (d \tau^2 + \sin^2 \tau \, d \gamma^2 + \cos^2 \tau \,d\xi^2 ); \quad \tau \in [0, \frac{\pi}{2}], \quad
\gamma, \xi \in [0, 2\pi)
\end{equation}

On a torus  ${\cal F}$ given by  $\tau = const.$  we find from (\ref{ta3}):
\begin{equation}
\theta^{\pm}  = 2 \de [\pm  \cos \sigma \cot 2  \tau      + \sin \sigma ]
\end{equation} 
 which formally differs  just by a factor of $2$ in the $\cot$-term from the spherical case (\ref{msph}).
This entails that the MOTSs determined by $\theta^{\pm} = 0$ and given via $\tau = \tau^{\pm}$
on some slice $\sigma = \sigma_0$
are now tori located at
$ \tau_0^{\pm}= \pm  \frac{\sigma_0}{2} +  \frac{\pi}{4}$.
As above the  MOTS exist  for all  $\sigma \in  (- \frac{\pi}{2}, \frac{\pi}{2})$ but
{\it now  they  form  timelike MOTTs.}
The area  of its toroidal sections
\begin{equation}
 \frac{A}{4\pi} = \vert \frac{\sin \tau \cos \tau }{\de^2 \cos^2 \sg} \vert = \vert \frac{\sin 2 \tau}{2 \de^2 \cos^2 \sg} \vert =\vert \frac{\sin (\pm \sg + \pi/2) }{2 \de^2 \cos^2 \sg} \vert = \frac{1}{2 \de^2 \cos \sg}
\end{equation}
diverges when $\sg \rightarrow \pm \frac{\pi}{2}$ and approaches the minimum $\frac{1}{2 \de^2}$ at $\sg = 0$.

\begin{figure}[h!]
\begin{psfrags}
\psfrag{b}{\Huge $\bullet$}
\psfrag{tau}{\Huge $\tau$}
\psfrag{si}{\Huge $\sigma$}
\psfrag{pph}{\Huge $\pi/2$}
\psfrag{mph}{\Huge $-\pi/2$}
\psfrag{pv}{\Huge $\pi/4$}
\psfrag{Tp}{\Huge ${\cal T}^+$}
\psfrag{Tm}{\Huge ${\cal T}^ -$}
\psfrag{pi}{\Huge $\pi$}
\psfrag{0}{\Huge $0$}
\psfrag{lm}{\Huge$\mathbf  \ell^-$}
\psfrag{lp}{\Huge$\mathbf \ell^+$}
\psfrag{Jp}{\Huge ${\cal J}^+$}
\psfrag{Jm}{\Huge ${\cal J}^-$}
\includegraphics[angle=0,totalheight=5cm]{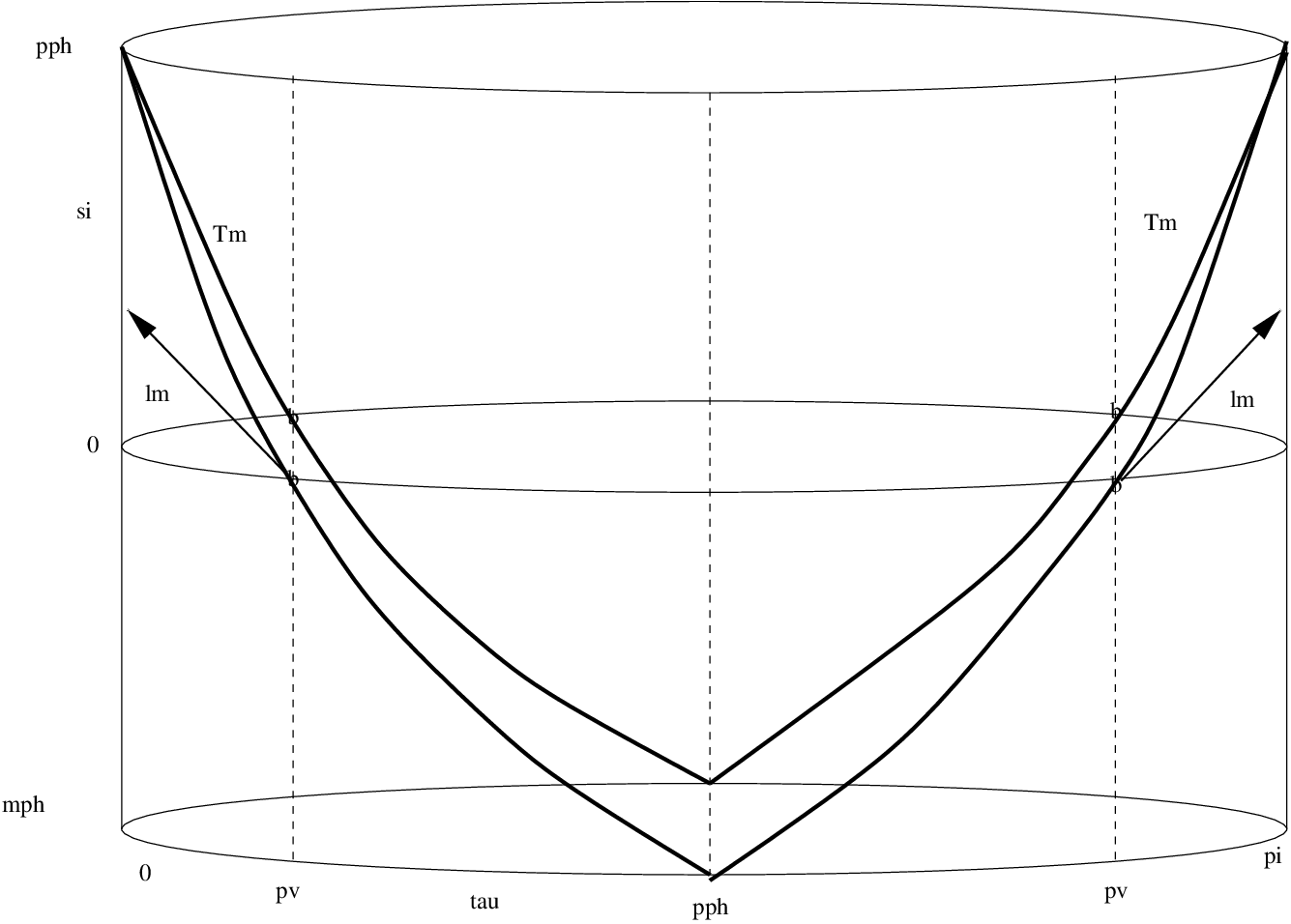}
\hspace*{1.5cm}
\includegraphics[angle=0,totalheight=5cm]{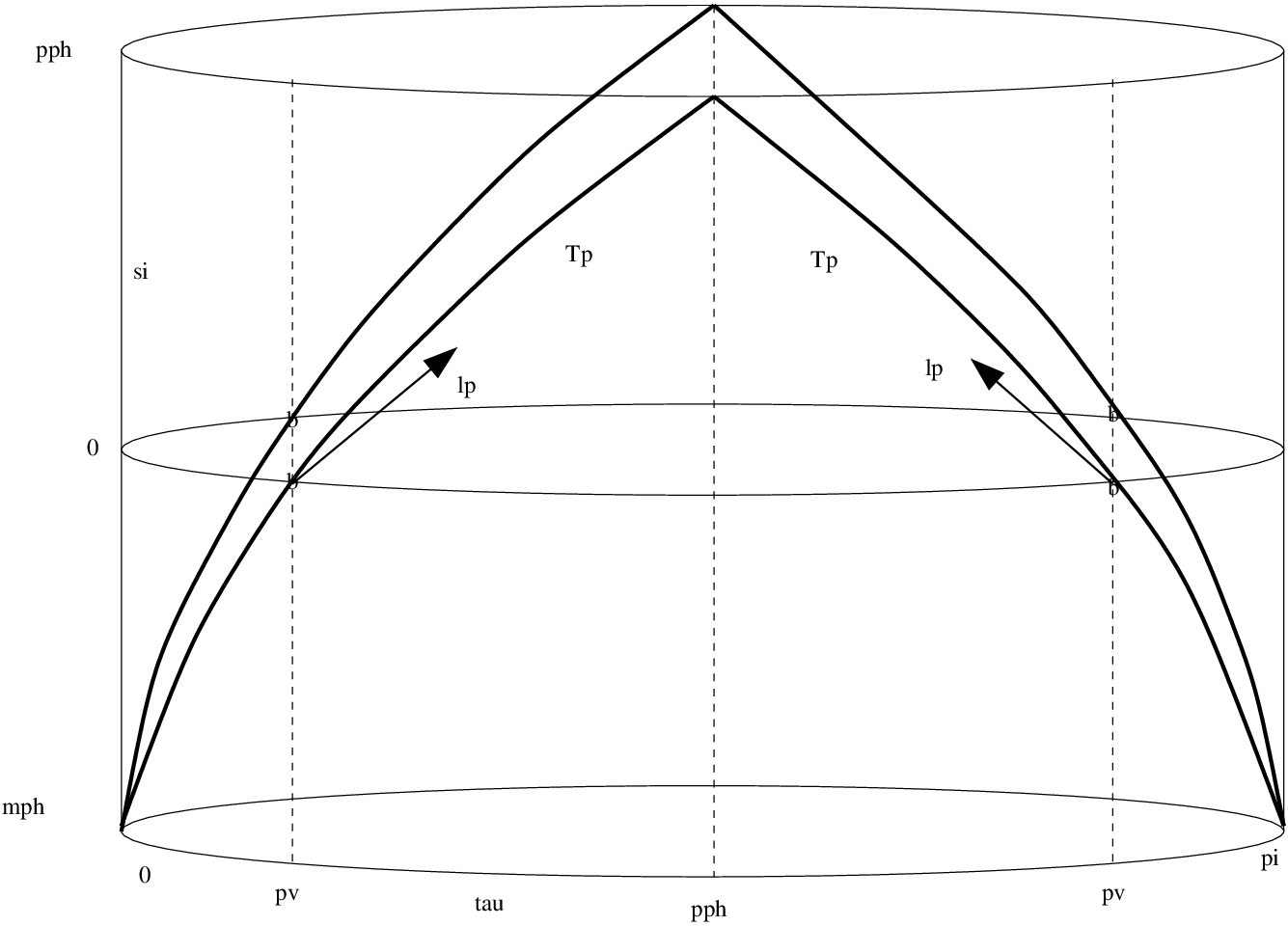}
\end{psfrags}
\caption{The MOTTs ${\cal T}^-$ (left) and ${\cal T}^+$ (right) with toroidal CMC sections}
\label{torsect}
\end{figure}

Fig. 2  shows  again the Einstein cylinder  with two dimensions suppressed;
Each MOTS ${\cal T}^{\pm}$ reduces to 4 points; the Clifford torus is indicated by 4
dots at $\sigma = 0$. The MOTSs are timelike, which is vizualized
by the null vectors $\ell^{\pm}$. As before this figure is also sketchy. \\

\noindent {\bf 3.2.4.3 MOTS of higher genus} \\

We now turn to MOTSs and MOTTs of higher genus ${\mathfrak g}$. A natural strategy is to set out from a minimal surface at the time-symmetric slice  $\sg = 0$, as such surfaces also have played special roles in the cases of  genus 0 and 1. We would like to show  existence of their evolution to MOTTs, and determine the causal character of the latter as well as the evolution of the area of their MOTS sections. Unfortunately, for genus ${\mathfrak g} > 1$ explicit expressions for the embedding functions and other geometric quantitites seem not to be available. Moreover, instability of the MOTS prevents
us from applying the general results of \cite{AMS,AMMS}.

Following \cite{HHT} we adopt here the following strategy.
It is known that any CMC surface in $\mathbb{R}^3$ or $\mathbb{S}^3$ gives a solution to a certain Lax pair; vice versa one can construct CMC surfaces as solutions to certain Lax pairs using the Sym-Bobenko formula. Making use of this, the DPW method \cite{DPW} is a way of constructing CMC surfaces from holomorphic potentials on Riemann surfaces by solving a certain Cauchy problem. To ensure that the resulting CMC surface is well-defined one has to solve in addition the so called monodromy problem. In this way the authors of \cite{HHT} construct families of CMC embeddings on $\mathbb{S}^3$ for every sufficiently high genus ${\mathfrak g}$; the latter restriction comes from the use of an implicit function theorem argument near $t = 0$, where $t = \frac{1}{2{\mathfrak g}+2} {\cal K}^{-1/2}$  and
the quantity ${\cal K}$ arises in the monodronomy construction, cf. Prop. 9 of \cite{HHT}.

We present in Thm.\  3 below   a combination of Thm.\  1 and a part of Thm.\ 2  of \cite{HHT}; item 4.\  below also  contains a slight extension of the aforementioned results.
Moreover, we change  the parametrisation from $\vp$ to $\psi =  \vp - \frac{\pi}{4}$ which simplifies the presentation.

 \begin{thm}[\bf Thms.\ 1 and  2 of \cite{HHT}]
 \label{thmunit}
 Our setting is the 3-dim. unit sphere $\mathbb{S}^3$. For every ${\mathfrak g} \in \mathbb{N}$ sufficiently large, there exists a smooth family of conformal CMC embeddings
 $f_{\psi}^{\mathfrak g} : {\cal R}^{\mathfrak g} \rightarrow \mathbb{S}^3$ from a Riemann surface with genus $g$ and parameter $\psi \in (-\frac{\pi}{4}, \frac{\pi}{4})$
 satisfying

 \begin{enumerate}

  \item  $f_{0}^{\mathfrak g}$ is the Lawson surface  $\xi_{1, {\mathfrak g}}$ of genus ${\mathfrak g}$.

  \item For $\psi \rightarrow \pm \frac{\pi}{4}$ the embedding $f_{\psi}^{\mathfrak g}$  smoothly converges to a doubly covered geodesic 2-sphere with $2{\mathfrak g} + 2$ branch points, i.e. the family $f_{\psi}^{\mathfrak g}$ cannot be extended in the parameter $\psi$ in the space of immersions.

 \item $f_{\psi}^{\mathfrak g} =  f_{- \psi}^{\mathfrak g}$ up to reparametrization and (orientation reversing) isometries of $\mathbb{S}^3$, and accordingly the constant mean curvatures satisfy
  $H_{\psi}^{\mathfrak g} = -H_{-\psi}^{\mathfrak g}$.

\item The mean curvature  $H_{\psi}^{\mathfrak g}$  of  $f_{\psi}^{\mathfrak g}$ decreases strictly monotonically from zero at $\psi = 0$ to some minimal value $ H_{\psi_m}^{\mathfrak g}$ for $\psi = \psi_m$ from which it  increases  strictly monotonically to zero for  $\psi \rightarrow  \frac{\pi}{4}$. The monotonicity behaviour for $\psi \in (-\frac{\pi}{4}, 0)$ then follows from property 3.

\item The Willmore energy Equ.\ (\ref{Wil}) of $f_{\psi}^{\mathfrak g}$ increases strictly monotonically from $\psi = 0$ towards  both $\psi \rightarrow  \pm \frac{\pi}{4}$ where
$W_{\pm \pi/4} = 8 \pi$.

 \end{enumerate}

\end{thm}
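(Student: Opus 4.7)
My plan is to present Theorem 3 as a direct translation of Thms.\ 1 and 2 of \cite{HHT} under the reparametrization $\psi = \varphi - \pi/4$, which centers the Lawson surface $\xi_{1,{\mathfrak g}}$ at $\psi = 0$ and places the two limiting doubly covered geodesic 2-spheres at $\psi = \pm \pi/4$. With this shift, items 1 and 2 follow verbatim from the existence and compactification portions of the HHT theorems. For item 3, the reflection symmetry of the family about the Lawson slice is implemented by an orientation-reversing isometry of $\mathbb{S}^3$ interchanging the two sides of $\xi_{1,{\mathfrak g}}$; this same isometry exchanges the two choices of unit normal and therefore sends $H_\psi^{\mathfrak g}$ to $-H_{-\psi}^{\mathfrak g}$. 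Item 5 is the Willmore monotonicity of Thm.\ 2 of \cite{HHT}, and the boundary value $W_{\pm \pi/4} = 8\pi$ is the Willmore energy of a doubly covered totally geodesic 2-sphere (area $4\pi$, mean curvature zero, counted with multiplicity two).

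The genuinely new content is item 4, namely the strict unimodal shape of the mean curvature on $(0, \pi/4)$. I would first observe that smoothness of the family $f^{\mathfrak g}_\psi$ in $\psi$, together with $H^{\mathfrak g}_0 = 0$ (Lawson is minimal) and $H^{\mathfrak g}_\psi \to 0$ as $\psi \to \pi/4$ (the limiting doubly covered geodesic 2-sphere is totally geodesic, hence minimal), implies by a Rolle-type argument that $H^{\mathfrak g}_\psi$ has at least one interior critical point $\psi_m \in (0, \pi/4)$. Choosing the normal compatibly with the convention of Lemma~\ref{umb} makes this critical point a local minimum with $H^{\mathfrak g}_\psi < 0$ in a neighborhood. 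Real-analyticity of the DPW construction in its loop-group and monodromy data (which is inherited by $H^{\mathfrak g}_\psi$ as a function of $\psi$) then ensures that the critical set of $H^{\mathfrak g}_\psi$ on $(0, \pi/4)$ is discrete.

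The hard part is ruling out further interior critical points so as to obtain strict monotonicity on $(0, \psi_m)$ and on $(\psi_m, \pi/4)$. My main lever would be the Willmore--area identity $W = A(H^2/4 + 1)$ combined with item 5. Differentiating in $\psi$ gives
\begin{equation*}
\dot W = \dot A \left(\frac{H^2}{4} + 1\right) + \frac{1}{2} A H \dot H,
\end{equation*}
so if $\dot A > 0$ and $\dot W > 0$ on $(0, \pi/4)$ (the former requiring a short separate argument, consistent with item 5 of Theorem~\ref{family} below), and $H < 0$ there, then the sign of $\dot H$ is constrained: $\dot H$ can vanish only where the two terms on the right cancel, which in view of the endpoint asymptotics should happen at a single point. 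If the area monotonicity proves hard to extract directly, a fallback is to appeal to the implicit-function asymptotics in the parameter $t = (2{\mathfrak g}+2)^{-1}{\cal K}^{-1/2}$ of \cite[Prop.~9]{HHT} near $\psi = 0$ and near $\psi = \pm \pi/4$, pinning down the leading behavior of $H^{\mathfrak g}_\psi$ strongly enough to exclude secondary oscillations for ${\mathfrak g} \gg 1$. The delicate step in either route is the upgrade from discreteness to uniqueness of the interior critical point, which is the main obstacle I anticipate.
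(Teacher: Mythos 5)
Your treatment of items 1--3 and 5 matches the paper, which likewise imports them directly from Thms.~1 and 2 of \cite{HHT} after the shift $\psi = \varphi - \pi/4$. The problem is item 4, which is exactly the part you identify as hard, and your primary route does not close the gap. Differentiating $W = A(H^2/4+1)$ gives $\tfrac12 A H \dot H = \dot W - \dot A\,(H^2/4+1)$, and even granting $\dot W>0$, $\dot A>0$ and $H<0$, the right-hand side is a difference of two positive quantities whose sign is uncontrolled; nothing forces it to change sign exactly once, so uniqueness of the interior critical point does not follow. Moreover the auxiliary claim $\dot A>0$ for the surfaces in the \emph{unit} sphere is neither proved nor obviously available: the area monotonicity in item 5 of Theorem~\ref{family} concerns the rescaled surfaces in $\mathbb{S}^3_\sigma$ (where the rescaled area coincides with the Willmore energy, $W_\sigma = \delta^2 A_\sigma$), not the area of $f_\psi^{\mathfrak g}$ in $\mathbb{S}^3$, so invoking it here is close to circular. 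As you yourself note, the upgrade from discreteness to uniqueness of the critical point is the missing step, and the Willmore identity does not supply it.

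What the paper actually does is your ``fallback,'' but carried out globally rather than only near the endpoints: Proposition~34 of \cite{HHT} gives the expansion $H(t,\psi) = 4t\cos 2\psi \,\ln[\tan(\psi+\tfrac{\pi}{4})] + O(t^2)$ uniformly in $\psi$, where $t\sim (2\mathfrak{g}+2)^{-1}$. The leading coefficient $H_0(\psi)$ is an explicit elementary function with a single non-degenerate critical point on each of $(-\tfrac{\pi}{4},0)$ and $(0,\tfrac{\pi}{4})$ (located at $\pm\psi_0$ with $(\sin 2\psi_0)\ln[\tan(\psi_0+\tfrac{\pi}{4})]=1$) and definite sign of $dH_0/d\psi$ away from it; non-degeneracy plus smoothness in $t$ then propagate the whole monotonicity pattern to $H(t,\cdot)$ for $t$ small, i.e.\ for $\mathfrak{g}\gg 1$, with the critical point perturbed to $\psi_m=\psi_0+O(t)$ (and a separate appeal to Prop.~34 near the zeros of $H$ at $\psi=0,\pm\tfrac{\pi}{4}$). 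This is where the hypothesis ${\mathfrak g}\gg 1$ is genuinely used. To repair your proof you would need to adopt this perturbative argument on the entire interval, not just near $\psi=0$ and $\psi=\pm\tfrac{\pi}{4}$; the Willmore-identity route should be dropped.
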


\begin{proof}

Items 1 - 3  and 5 follow as in the proofs of Thms.\ 1 and 2 of \cite{HHT}; to show item 4 we note that
 from Proposition 34 of \cite{HHT}, the mean curvature $H^{\mathfrak g}_{\psi} = H(t({\mathfrak g}), \psi)$ with
 $ t ({\mathfrak g})= \frac{1}{2{\mathfrak g} + 2} {\cal K}^{-1/2}(t,\psi)$ is smooth and its Taylor expansion in $t$ takes the form
 \begin{align}
  H(t,\psi) =  4 t \cos 2\psi \ln \left[\tan (\psi + \frac{\pi}{4})\right] + O(t^2).
  \end{align}

Here ${\cal K}$ depends on $t$ itself but such that ${\cal K}(t,\psi) = 1 + O(t^2)$ near $t = 0$, cf Sect. 6.3 of \cite{HHT}.
 A calculation shows that at the values  $\psi = \psi_0$ and $ \psi = -\psi_0$,
 $H_0 (\psi)= (\partial H/\partial t)(0,\psi)$ takes on its unique non-degenerate minimum and maximum, respectively, i.e.
 \begin{align}
 \label{ex}
 \frac{dH_0}{d\psi}(\pm \psi_0) =   0  \qquad \frac{d^2H_0}{d\psi^2} (\pm \psi_0) \neq 0,
 \end{align}
 where $\pm \psi_0$ are given implicitly by
 \begin{align}
  (\sin 2 \psi_0) \ln\left[\tan \left(\psi_0 + \frac{\pi}{4}\right)\right] = 1.
  \end{align}

 Moreover, there hold the monotonicity properties
 \begin{eqnarray}
 \label{gr}
 \frac{dH_0}{d\psi} & > &  0 \qquad  \forall~ \psi \in \left[\left.- \frac{\pi}{4}, -\psi_0\right)\right. ~
 \mbox{and}~ \psi \in \left.\left( \psi_0, \frac{\pi}{4}\right.\right] \\
 \label{sm}
  \frac{dH_0}{d\psi} &  < &  0 \qquad \forall~ \psi \in ( -\psi_0, \psi_0)
 \end{eqnarray}

We now observe  that there exists a $\psi_m(\psi, t) = \psi_0 + O(t)$ such that, for sufficiently small $t$,   properties (\ref{ex}), (\ref{gr}) and (\ref{sm}) hold for $H(t,\psi)$ as well, with $\psi_0$ replaced by $\psi_m$ everywhere. Away from the zeros of $H$ which are
at $\psi = 0,\pm \frac{\pi}{4}$ (cf.\  Prop. 18 of \cite{HHT}) this follows from smoothness and from the non-degeneracy (\ref{ex}), while in a neighbourhood of the  zeros of $H$, the assertion holds by virtue of  Prop. 34 of \cite{HHT}.
\end{proof}

\begin{rem}
 The convergence properties of the CMC surfaces for $\psi \rightarrow \pm \frac{\pi}{4} $
 (point 2 of Theorem \ref{thmunit}) are explained in Prop. 26 of \cite{HHT}.
\end{rem}

We are now ready to prove Thm.\  \ref{family} stated in the Introduction.

\begin{proof}[Proof of Thm.\  \ref{family}]
We define the conformal rescaling ${\cal I}_{\sg}: \mathbb{S}^3 \rightarrow \mathbb{S}^3_{\sg}$ from the unit sphere to the sphere of radius $\rho = \delta^{-1} \cos^{-1} \sg$. This clearly induces a scaling for all embedded surfaces; we note that the mean curvature of such surfaces
changes from $H$ to  ${\cal H}_{\sg} =  \delta H \cos \sg$.
We now define the embedding ${\cal F}_{\psi}^{\mathfrak g} : {\cal R}^{\mathfrak g} \rightarrow \mathbb{S}^3_{\sg}$ by
 ${\cal F}_{\psi}^{\mathfrak g} = {\cal I}_{\sg(\psi)} \circ f_{\psi}^{\mathfrak g}$ where $\sg(\psi)$ is constructed as follows. We recall from (\ref{hcmc}) that any CMC surface with mean curvature ${\cal H}_{\sg}$ on $\sg =$ const. corresponds to a MTS if ${\cal H}_{\sg} = \mp 2 \delta \sin \sg$.
 In particular, if the mean curvatures $H_{\psi}^{\mathfrak g}$ of the CMC surfaces
 $f_{\psi}^{\mathfrak g}$ constructed in Thm.\  3 satisfy
\begin{align}
\label{mcscal}
H_{\psi}^{\mathfrak g} = \frac{{\cal H}_{\sg}}{\delta \cos \sg} = \mp 2 \tan \sg
\end{align}
they build up  MTTs, which serves to define $\sg({\psi})$.

In order to prove the final statement 5.\ on monotonicity of the area, we show  that the  Willmore functional $W_{\sg}$ (\ref{Wil}) built from the rescaled quantities is just the area of
the embedded MOTS. To see this we note that at time $\sg$, the rescaled area is ${\cal A}_{\sg} = A \delta^{-2} \cos^{-2} \sg$  while the rescaled mean curvature is  ${\cal H}_{\sg} =  \delta H \cos \sg$. Inserting in (\ref{Wil}) we obtain $W_{\sg} = \de^2 A_{\sg}$ which finishes the proof.
\end{proof}



\begin{rem}
It is stated below Thm.\  1 in \cite{HHT} that "the moduli space of genus ${\mathfrak g}$ CMC surfaces is 1-dimensional at the Lawson surface $\xi_{1,\mathfrak g}$". This is not to be understood in the sense that the  MTTs constructed in Thm. \ref{family} above are unique,
as also indicated in point 3. of that Theorem. In fact isometries of $\mathbb S^3$ which are not tangent to  ${\cal F}_{\psi}^{\mathfrak g}$ (and which exist, cf.\ Remark \ref{bcm}) will ``move around" ${\cal F}_{\psi}^{\mathfrak g}$ at any parameter value $\psi$.
\end{rem}

\begin{rem}
We recall  that there are no non-trivial isometries of ${\cal F}_{\psi}^{\mathfrak g}$  in the case ${\mathfrak g} \ge 2$ in which the Euler number $\chi = 2(1-\mathfrak g)$ is negative. Assuming the contrary,  the Poincar\'e-Hopf theorem implies that $\chi$ is also the sum of all indices of the (isolated) zeros of the corresponding Killing field. But if an isometry    in 2 dim. has fixed points, it is necessarily a rotation in a neighbourhood. Hence all Killing indices are $+1$  which implies that  $\chi$
is non-negative. This restricts the topology of ${\cal F}_{\psi}^{\mathfrak g}$ to the sphere or the torus if it has isometries.
\end{rem}

\begin{rem}
We note that the conformal rescaling ${\cal I}_{\sg}: \mathbb{S}^3 \rightarrow \mathbb{S}^3_{\sg}$
involved in passing from Thm.\  \ref{thmunit} to Thm.\  \ref{family} is precisely the inverse of the one used to construct the Penrose diagrams Figs. \ref{sphsect} and \ref{torsect} of dS.
Hence a Penrose-type diagram arises just by directly ``stacking"
the family of CMC surfaces $f_{\psi}^{\mathfrak g}$  constructed in Thm.\  \ref{thmunit}.
We recall, however,
that the embedding parameter $\psi$ is not a monotonic function of the cosmological time $\sg$ -
rather, the constructed MOTT ``turns around" at $\psi = \pm \psi_m$.
\end{rem}

\section{Discussion}
\label{dis}
Note that monotonicity of area holds along MTTs with MTS sections of spherical, toroidal and high-genus topology in the complete spherical slicing of dS as described in  Sect.\ \ref{css}. Except for spherical MTS whose area stays constant, the area increase is even strictly monotonic
upon moving away from the time-symmetric surface.
In view of the extensively discussed correspondence between area and entropy, our MTTs are thus candidates for satisfying the ``second law" of thermodynamics in an appropriate setting.

Such settings arise in attempts of understanding the "information paradox" of black holes, and in attempts of making sense of the idea of AdS/CFT correspondence (cf.\ e.g.\ \cite{EW} and references therein). In the latter context, MTTs run under the name "holographic screens". In fact an area theorem has been obtained in \cite{BE} (cf.\  Thm.\  IV.3) and \cite{ABHR} for so-called "regular holographic screens" whose definition consists of four somewhat subtle conditions (cf.\ Def.\   II.8 of \cite{BE}).
We have not been able to extract the required information from our Thm.\  \ref{family} to check these
requirements; in particular the causal character of the constructed MTT is unclear except that they  must  be spacelike near the turning points $\pm \psi_m$ where the mean curvatures achieve their extrema $\pm {\cal H}^{\mathfrak g}_{\psi_m}$ (cf.\ item 4 of Theorem \ref{family}).
In order to obtain monotonicity of area of the MTS (item 5 of Thm.\  \ref{family}) we are thus bound to the argument involving their rescaled Willmore energy (item 5 of Thm.\  \ref{thmunit}).

~\\


\noindent  {\large\bf Acknowledgement.}
M.M. acknowledges financial support under Projects PID2021-122938NB-I00 (Spanish Ministerio de Ciencia e Innovación and FEDER “A way of making Europe”),  SA097P24 (JCyL) and RED2022-134301-T (MCIN).

This research was also funded in part by the Austrian Science Fund
(FWF) [Grant  DOI 10.55776/P35078] (C.R. and W.S.) as well as [Grant DOI
10.55776/P33594] and [Grant DOI 10.55776/EFP6] (R.S.).

For open access purposes, the authors have applied a CC BY public copyright license to any author accepted manuscript version arising from this submission.


\end{document}